\newtheorem{thm}{Theorem}[section]
\newtheorem{lem}[thm]{Lemma}
\newtheorem{defn}[thm]{Definition}
\newcommand{\mb}[1]{ \textbf{#1} }
\newcommand{\mbv}{\mb{v}}
\newcommand{\mbx}{\mb{x}}
\newcommand{\mbr}{\mb{r}}
\DeclareMathOperator{\argmax}{\textrm{argmax}}
\begin{document}

\title{The simplex method is strongly polynomial for deterministic Markov decision processes}

\author{Ian Post
\thanks{Department of Combinatorics and Optimization, University of Waterloo. Research done while at Stanford University.
Email: \href{mailto:ian@ianpost.org}{ian@ianpost.org}. 
Research supported by NSF grant 0904325. We also acknowledge financial support from grant \#FA9550-12-1-0411 from the U.S. Air Force Office of Scientific Research (AFOSR) and the
Defense Advanced Research Projects Agency (DARPA).
}
\and Yinyu Ye
\thanks{Department of Management Science and Engineering, Stanford University.
Email: \href{mailto:yinyu-ye@stanford.edu}{yinyu-ye@stanford.edu}.}
}

\maketitle

\pdfbookmark[1]{Abstract}{MyAbstract}
\begin{abstract}
We prove that the simplex method with the highest gain/most-negative-reduced cost pivoting rule converges in strongly polynomial time for deterministic Markov decision processes (MDPs) regardless of the discount factor. For a deterministic MDP with $n$ states and $m$ actions, we prove the simplex method runs in $O(n^3 m^2 \log^2 n)$ iterations if the discount factor is uniform and $O(n^5 m^3 \log^2 n)$ iterations if each action has a distinct discount factor. Previously the simplex method was known to run in polynomial time only for discounted MDPs where the discount was bounded away from 1 \cite{ye_simplex}. 

Unlike in the discounted case, the algorithm does not greedily converge to the optimum, and we require a more complex measure of progress. We identify a set of layers in which the values of primal variables must lie and show that the simplex method always makes progress optimizing one layer, and when the upper layer is updated the algorithm makes a substantial amount of progress. In the case of nonuniform discounts, we define a polynomial number of ``milestone'' policies and we prove that, while the objective function may not improve substantially overall, the value of at least one dual variable is always making progress towards some milestone, and the algorithm will reach the next milestone in a polynomial number of steps.
\end{abstract}

\section{Introduction}

Markov decision processes (MDPs) are a powerful tool for modeling repeated decision making in stochastic, dynamic environments. An MDP consists of a set of states and a set of actions that one may perform in each state. Based on an agent's actions it receives rewards and affects the future evolution of the process, and the agent attempts to maximize its rewards  over time (see Section \ref{section_prelim} for a formal definition). MDPs are widely used in machine learning, robotics and control, operations research, economics, and related fields. See the books \cite{puterman} and \cite{bertsekas} for a thorough overview.

Solving MDPs is also an important problem theoretically. Optimizing an MDP can be formulated as a linear program (LP), and although these LPs possess extra structure that can be exploited by algorithms like Howard's policy iteration method \cite{howard}, they lie just beyond the point at which our ability to solve LPs in strongly-polynomial time ends (and are a natural target for extending this ability), and they have proven to be hard in general for algorithms previously thought to be quite powerful, such as randomized simplex pivoting rules \cite{fhz_randsimplex}.

In practice \cite{littmandeankaelbling} MDPs are solved using policy iteration, which may be viewed as a parallel version of the simplex method with multiple simultaneous pivots, or value iteration \cite{bellman}, an inexact approximation to policy iteration that is faster per iteration. If the discount factor $\gamma$, which determines the effective time horizon (see Section \ref{section_prelim}), is small it has long been known that policy and value iteration will find an $\epsilon$-approximation to the optimum \cite{bellman}. It is also well-known that value iteration may be exponential, but policy iteration resisted worst-case analysis for many years. It was conjectured to be strongly polynomial but except for highly-restricted examples \cite{madani} only exponential time bounds were known \cite{mansoursingh}.  Building on results for parity games \cite{paritygame}, Fearnley recently gave an exponential lower bound \cite{fearnley}. Friedmann, Hansen, and Zwick extended Fearnley's techniques to achieve sub-exponential lower bounds for randomized simplex pivoting rules \cite{fhz_randsimplex} using MDPs, and Friedmann gave an exponential lower bound for MDPs using the least-entered pivoting rule \cite{friedmann_simplex}.
Melekopoglou and Condon proved several other simplex pivoting rules are exponential \cite{melecondon_simplex}.

On the positive side, Ye designed a specialized interior-point method that is strongly polynomial in everything except the discount factor \cite{ye_intpt}.
Ye later proved that for discounted MDPs with $n$ states and $m$ actions, the simplex method with the most-negative-reduced-cost pivoting rule and, by extension, policy iteration, run in time $O(nm/(1-\gamma)\log(n/(1-\gamma)))$ on discounted MDPs, which is polynomial for fixed $\gamma$ \cite{ye_simplex}. Hansen, Miltersen, and Zwick improved the policy iteration bound to $O(m/(1-\gamma)\log(n/(1-\gamma)))$ and extended it to both value iteration as well as the strategy iteration algorithm for two player turn-based stochastic games \cite{strategyiteration}.

But the performance of policy iteration and simplex-style basis-exchange algorithms on MDPs remains poorly understood. Policy iteration, for instance, is conjectured to run in $O(m)$ iterations on deterministic MDPs, but the best upper bounds are exponential, although a lower bound of $O(m)$ is known \cite{hansenzwick_mmc}.
Improving our understanding of these algorithms this is an important step in designing better ones with polynomial or even strongly-polynomial guarantees. 

Motivated by these questions, we analyze the simplex method with the most-negative-reduced-cost pivoting rule on deterministic MDPs. For a deterministic MDP with $n$ states and $m$ actions, we prove that the simplex method terminates in $O(n^3m^2\log^2 n)$ iterations regardless of the discount factor, and if each action has a distinct discount factor, then the algorithm runs in $O(n^5 m^3 \log^2 n)$ iterations. Our results do not extend to policy iteration, and we leave this as a challenging open question.

Deterministic MDPs were previously known to be solvable in strongly polynomial time using specialized methods not applicable to general MDPs---minimum mean cycle algorithms \cite{papadimitriou_mdp} or, in the case of nonuniform discounts, by exploiting the property that the dual LP has only two variables per inequality \cite{hochbaum_tvpi}. The fastest known algorithm for uniformly discounted deterministic MDPs runs in time $O(mn)$ \cite{madani2010discounted}.
However, these problems were not known to be solvable in polynomial time with the more-generic simplex method. More generally, we believe that our results help shed some light on how algorithms like simplex and policy iteration function on MDPs.

Our proof techniques, particularly in the case of nonuniform discounts, may be of independent interest. For uniformly discounted MDPs, we show that the values of the primal flux variables must lie within one of two intervals or layers of polynomial size depending on whether an action is on a path or a cycle. Most iterations update variables in the smaller path layer, and we show these converge rapidly to a locally optimal policy for the paths, at which point the algorithm must update the larger cycle layer and makes a large amount of progress towards the optimum. Progress takes the form of many small improvements interspersed with a few much larger ones rather than uniform convergence.

The nonuniform case is harder, and our measure of progress is unusual and, to the best of our knowledge, novel. We again define a set of intervals in which the value of variables on cycles must fall, and these define a collection of intermediate milestone or checkpoint values for each dual variable (the value of a state in the MDP). Whenever a variable enters a cycle layer, we argue that a corresponding dual variable is making progress towards the layer's milestone and will pass this value after enough updates.  When each of these checkpoints have been passed, the algorithm must have reached the optimum. We believe some of these ideas may prove useful in other problems as well.

In Section \ref{section_prelim} we formally define MDPs and describe a number of well-known properties that we require. In Section \ref{section_uniform} we analyze the case of a uniform discount factor, and in Section \ref{section_nonuniform} we extend these results to the nonuniform case.

\section{Preliminaries}

\label{section_prelim}

Many variations and extensions of MDPs have been defined, but we will study the following problem.
A Markov decision process consists of a set of $n$ states $S$ and $m$ actions $A$. Each action $a$ is associated with a single state $s$ in which it can be performed, a reward $\mbr_a \in \mathbb{R}$ for performing the action, and a probability distribution $P_a$ over states to which the process will transition when using action $a$. We denote by $P_{a,s}$ the probability of transitioning to state $s$ when taking action $a$. There is at least one action usable in each state.
Let $\mbr$ be the vector of rewards indexed by $a$ with entries $\mbr_a$, $A_s \subset A$ be the set of actions performable in state $s$, and $P$ be the $n$ by $m$ matrix with columns $P_a$ and entries $P_{a,s}$.
We will restrict the distributions $P_a$ to be deterministic for all actions, in which case states may be thought of as nodes in a graph and actions as directed edges. However, the results in this section apply to MDPs with stochastic transitions as well.

At each time step, the MDP starts in some state $s$ and performs an action $a$ admissible in state $s$, at which point it receives the reward $\mbr_a$ and transitions to a new state $s'$ according to the probability distribution $P_a$.
We are given a discount factor $\gamma < 1$ as part of the input, and our goal is to choose actions to perform so as to maximize the expected discounted reward we accumulate over an infinite time horizon. 
The discount can be thought of as a stopping probability---at each time step the process ends with probability $1-\gamma$.
Normally, the discount $\gamma$ is uniform for the entire MDP, but in Section \ref{section_nonuniform} we will allow each action to have a distinct discount $\gamma_a$.

Due to the Markov property---transitions depend only the current state and action---there is an optimal strategy that is memoryless and depends only on the current state. Let $\pi$ be such a {\em policy}, a distribution of actions to perform in each state. This defines a Markov chain and a value for each state:

\begin{defn}
\label{def_value}
Let $\pi$ be a policy, $P^\pi$ be the $n$ by $n$ matrix where $P^\pi_{s,s'}$ is the probability of transitioning from $s'$ to $s$ using $\pi$, and $\mbr_\pi$ the vector of expected rewards for each state according to the distribution of actions in $\pi$.
The {\em value vector} $\mbv^\pi$ is indexed by states, and $\mbv^\pi_s$ is equal to the expected total discounted reward of starting in state $s$ and following policy $\pi$. It is defined as $\mbv^\pi = \sum_{i\ge 0} (\gamma (P^\pi)^T)^i\mbr_\pi = (I-\gamma P^\pi)^{-T}\mbr_\pi$ or equivalently by
\begin{equation}
\label{eq_value_vector}
\mbv^\pi = \mbr_\pi + \gamma (P^\pi)^T \mbv^\pi .
\end{equation}
\end{defn}

If policy $\pi$ is randomized and uses two or more actions in some state $s$, then the value of $\mbv^\pi_s$ is an average of the values of performing each of the pure actions in $s$, and one of these is the largest. Therefore we can replace the distribution by a single action and only increase the value of the state. In the remainder of the paper we will restrict ourselves to pure policies in which a single action is taken in each state.

In addition to the value vector, a policy $\pi$ also has an associated flux vector $\mbx^\pi$ that will play a critical role in our analysis. It acts as a kind of ``discounted flow.''  Suppose we start with a single unit of ``mass'' on every state and then run the Markov chain. At each time step we remove $1-\gamma$ fraction of the mass on each state and redistribute the remaining mass according to the policy $\pi$. Summing over all time steps, the total amount of mass that passes through each action is its flux.  More formally,

\begin{defn}
\label{def_flux}
Let $\pi$ be a policy and $P^\pi$ the $n$ by $n$  transition matrix for $\pi$ formed by the columns $P_a$ for actions in $\pi$. The {\em flux vector} $\mbx^\pi$ is indexed by actions. If action $a$ is not in $\pi$ then $\mbx^\pi_a = 0$, and if $\pi$ uses $a$ in state $s$, then $\mbx^\pi_a = \mb{y}_s$, where
\begin{equation}
\label{eq_flux_vector}
\mb{y} = \sum_{i \ge 0} (\gamma P^\pi)^i \mb{1} = (I - \gamma P^\pi)^{-1} \mb{1} \; ,
\end{equation}
and $\mb{1}$ is the all ones vector of dimension $n$. The flux is the total discounted number of times we use each action if we start the MDP in all states and run the Markov chain $P^\pi$ discounting by $\gamma$ each iteration.
\end{defn}

Note that if $a \in \pi$ then $\mbx^\pi_a \ge 1$, since the initial flux placed on $a$'s state always passes through $a$. Further note that each bit of flux can be traced back to one of the initial units of mass placed on each state, although the vector $\mbx^\pi$ sums flux from all states. This will be important in Section \ref{section_nonuniform}.

Solving the MDP can be formulated as the following primal/dual pair of LPs, in which the flux and value vectors correspond to primal and (possibly infeasible) dual solutions:

\begin{equation}
\label{eq_primal}
\begin{array}{rlrl}
\textsc{Primal:} && \\
\textrm{maximize} & \multicolumn{2}{l}{ \sum_a \mbr_a\mbx_a } \\
\textrm{subject to} &\forall s \in S, & \sum_{a \in A_s} \mbx_a & = 1 + \gamma \sum_{a} P_{a,s}\mbx_a\\
& & \mbx & \ge 0 \\
\end{array}
\end{equation}

\begin{equation}
\label{eq_dual}
\begin{array}{rlrl}
\textsc{Dual:} && \\
\textrm{minimize} & \multicolumn{2}{l}{ \sum_s \mbv_s } \\
\textrm{subject to} &\forall s \in S, a \in A_s, & \mbv_s \ge \mbr_a + \gamma \sum_{s'} P_{a,s'}\mbv_{s'} \\
\end{array}
\end{equation}

The constraint matrix of \eqref{eq_primal} is equal to $M - \gamma P$, where $M_{s,a} = 1$ if action $a$ can be used in state $s$ and 0 otherwise.
The dual value LP \eqref{eq_dual} is often defined as the primal, as it is perhaps more intuitive, and \eqref{eq_primal} is rarely considered. However, our analysis centers on the flux variables, and algorithms that manipulate policies can more naturally be seen as moving through the polytope \eqref{eq_primal}, since vertices of the polytope represent policies:

\begin{lem}
The LP \eqref{eq_primal} is non-degenerate, and there is a bijection between vertices of the polytope and policies of the MDP.
\label{lem_vertices}
\end{lem}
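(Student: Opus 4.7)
The plan is to establish both halves of the bijection explicitly and read non-degeneracy off the construction. The key structural observation is that for any policy $\pi$, the $n \times n$ submatrix of the constraint matrix $M - \gamma P$ indexed by the columns of $\pi$ is, after a natural reordering, exactly $I - \gamma P^\pi$.

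First I would argue that every policy yields a vertex. Given $\pi$, label the column corresponding to action $\pi(s')$ by the state $s'$. The $(s, s')$ entry of the resulting submatrix $B_\pi$ is $M_{s, \pi(s')} - \gamma P_{\pi(s'), s} = \delta_{ss'} - \gamma P^\pi_{s, s'}$, so $B_\pi = I - \gamma P^\pi$. Since $P^\pi$ is column-stochastic and $\gamma < 1$, the Neumann series $\sum_{i \ge 0}(\gamma P^\pi)^i$ converges to $(I - \gamma P^\pi)^{-1}$, so $B_\pi$ is invertible. The primal equalities rearrange to $(M - \gamma P)\mbx = \mb{1}$, and setting $\mbx_a = 0$ for $a \notin \pi$ forces $\mbx_\pi = (I - \gamma P^\pi)^{-1}\mb{1}$, which coincides with Definition \ref{def_flux}. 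The $i = 0$ term of the Neumann expansion gives $\mbx_\pi \ge \mb{1}$ componentwise, so the resulting basic solution is feasible and every basic variable is strictly positive (indeed at least $1$).

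Next I would argue that every vertex must arise this way. For any feasible $\mbx$, the equality constraint for each state $s$ reads $\sum_{a \in A_s}\mbx_a = 1 + \gamma\sum_a P_{a,s}\mbx_a \ge 1$, using only the nonnegativity of $\gamma$, $P$, and $\mbx$. Hence every one of the $n$ states contributes at least one strictly positive $\mbx_a$. A basic feasible solution has at most $n$ nonzero entries, so by pigeonhole the support of any BFS must consist of exactly one action per state, which is precisely a policy $\pi$; the values of these entries are then uniquely determined by the equalities and, by the previous step, equal $\mbx^\pi$.

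The two constructions are mutual inverses, so the bijection is established, and non-degeneracy is immediate from $\mbx^\pi_a \ge 1$ for every $a \in \pi$. I expect the one genuinely technical point to be the identification $B_\pi = I - \gamma P^\pi$ together with invoking $\gamma < 1$ and column-stochasticity of $P^\pi$ to ensure invertibility; the rest of the argument is bookkeeping about supports and constraints.
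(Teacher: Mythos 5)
Your proof is correct and follows essentially the same route as the paper: both use the row constraint $\sum_{a \in A_s}\mbx_a = 1 + \gamma\sum_a P_{a,s}\mbx_a \ge 1$ to force one nonzero per state, pigeonhole against the $n$ basic variables to identify bases with policies, and conclude non-degeneracy from $\mbx^\pi_a \ge 1$. You are somewhat more explicit than the paper in writing out that the basis submatrix is $I - \gamma P^\pi$ and invoking the Neumann series for invertibility, but this is exactly what the paper's remark ``solving for the flux vector is identical to solving for a basis'' is implicitly using.
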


\begin{proof}
Policies have exactly $n$ nonzero variables, and solving for the flux vector in \eqref{eq_flux_vector} is identical to solving for a basis in the polytope, so policies map to bases.
Write the constraints in \eqref{eq_primal} in the standard matrix form $A\mbx = \mb{b}$.
The vector $\mb{b}$ is $ \mb{1}$, and $A = M - \gamma P$. In a row $s$ of $A$ the only positive entries are on actions usable in state $s$, so if $A\mbx = \mb{b}$, then $\mbx$ must have a nonzero entry for every state, i.e., a choice of action for every state. Bases of the LP have $n$ variables, so they must include only one action per state.

Finally, as shown above $\mbx^\pi_a \ge 1$ for all $a$ in a policy/basis, so the LP is not degenerate, and bases correspond to vertices.
\end{proof}

By Lemma \ref{lem_vertices}, the simplex method applied to \eqref{eq_primal} corresponds to a simple, single-switch version of policy iteration: we start with an arbitrary policy, and in each iteration we change a single action that improves the value of some state.
Since the LP is not degenerate, the simplex method will find the optimal policy with no cycling.
We will use Dantzig's most-negative-reduced-cost pivoting rule to choose the action switched. Since \eqref{eq_primal} is written as a maximization problem, we will refer to reduced costs as gains and always choose the highest gain action to switch/pivot. For MDPs, the gains have a simple interpretation:

\begin{defn}
\label{def_reduced_costs}
The {\em gain} (or {\em reduced cost}) of an action $a$ for state $s$ with respect to a policy $\pi$ is denoted $\mbr^\pi_a$ and is the improvement in the value of $s$ if $s$ uses action $a$ once and then follows $\pi$ for all time. Formally, $\mbr^\pi_a = (\mbr_a + \gamma P_a^T \mbv^{\pi}) - \mbv^\pi_s$, or, in vector form
\begin{equation}
\label{eq_reducedcost}
\mbr^\pi = \mbr - (M - \gamma P)^T\mbv^\pi \; .
\end{equation}
\end{defn}

We denote the optimal policy by $\pi^*$, and the optimal flux, values, and gains by $\mbx^*$, $\mbv^*$, and $\mbr^*$.
The following are basic properties of the simplex method, and we prove them for completeness.

\begin{lem}
\label{lem_reducedcosts}
Let $\pi$ and $\pi'$ be any policies. The gains satisfy the following properties
\begin{itemize}

\item $(\mb{r}^\pi)^T\mb{x}^{\pi'} = \mb{r}^T\mb{x}^{\pi'} - \mb{r}^T\mb{x}^\pi = \mb{1}^T\mb{v}^{\pi'}-\mb{1}^T\mb{v}^\pi$,

\item $\mbr^\pi_a = 0$ for all $a \in \pi$, and

\item $\mbr^*_a \le 0$ for all $a$.

\end{itemize}
\end{lem}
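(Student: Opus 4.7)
The plan is to prove the three items in an order that lets them build on each other, starting with the middle item, using it to get a single-policy strong duality identity, and then deriving the first and third items.

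First I would tackle the middle bullet, which is the most direct: if $a \in \pi$ is the action used in state $s$, then the value equation \eqref{eq_value_vector} says $\mbv^\pi_s = \mbr_a + \gamma P_a^T \mbv^\pi$, so plugging into the definition $\mbr^\pi_a = \mbr_a + \gamma P_a^T \mbv^\pi - \mbv^\pi_s$ immediately gives $\mbr^\pi_a = 0$.

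Next I would establish the first bullet. Starting from the vector form \eqref{eq_reducedcost}, I would compute
\begin{equation*}
(\mbr^\pi)^T\mbx^{\pi'} = \mbr^T\mbx^{\pi'} - (\mbv^\pi)^T (M - \gamma P)\mbx^{\pi'} = \mbr^T\mbx^{\pi'} - \mb{1}^T\mbv^\pi,
\end{equation*}
using primal feasibility $(M-\gamma P)\mbx^{\pi'} = \mb{1}$ from \eqref{eq_primal} in the last step. Applying this identity with $\pi' = \pi$ gives $(\mbr^\pi)^T\mbx^{\pi} = \mbr^T\mbx^{\pi} - \mb{1}^T\mbv^\pi$, but the left-hand side vanishes by the middle bullet together with the fact that $\mbx^\pi_a = 0$ for $a \notin \pi$. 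This yields the single-policy strong duality identity $\mb{1}^T\mbv^\pi = \mbr^T\mbx^\pi$ (and likewise for $\pi'$). Substituting this into the earlier display produces both equalities in the first bullet.

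For the third bullet, I would identify $\mbv^*$ as the dual solution coming from the optimal basis $\pi^*$. Indeed, writing the primal basis matrix for $\pi^*$ and solving for the standard basic dual solution gives exactly the system $\mbv^*_s = \mbr_a + \gamma P_a^T \mbv^*$ for each $a\in\pi^*$ used in state $s$, which is precisely \eqref{eq_value_vector}. Since the primal is nondegenerate by Lemma~\ref{lem_vertices}, the simplex method terminates at an optimal basis whose reduced costs are nonpositive (for a maximization LP); these reduced costs are exactly $\mbr^*_a = \mbr_a + \gamma P_a^T \mbv^* - \mbv^*_s$, giving $\mbr^*_a \le 0$ for every action $a$.

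None of the steps should present a real obstacle; the only mildly delicate point is the last one, where one must articulate why the value vector $\mbv^{\pi^*}$ coincides with the LP's dual optimum. This is just the observation that the defining linear system for $\mbv^\pi$ is identical to the system defining the basic dual solution at the basis $\pi$, a fact that is standard but worth stating explicitly rather than invoking a black-box LP duality theorem.
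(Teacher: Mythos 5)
Your proof is correct and follows essentially the same strategy as the paper: the central step is the same identity $(\mbr^\pi)^T\mbx^{\pi'} = \mbr^T\mbx^{\pi'} - \mb{1}^T\mbv^\pi$ obtained by substituting $\eqref{eq_reducedcost}$ and invoking primal feasibility $(M-\gamma P)\mbx^{\pi'}=\mb{1}$. Your two deviations are stylistic rather than substantive. For the first bullet you obtain $\mbr^T\mbx^\pi=\mb{1}^T\mbv^\pi$ by specializing the general identity to $\pi'=\pi$ and noting $(\mbr^\pi)^T\mbx^\pi=0$ via complementarity, whereas the paper derives it by a direct matrix computation $\mbr_\pi^T(I-\gamma P^\pi)^{-1}\mb{1}=(\mbv^\pi)^T\mb{1}$; your route is a touch cleaner since it reuses what is already on the table. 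For the third bullet you invoke the standard termination condition of the simplex method on a bounded, nondegenerate maximization LP (nonpositive reduced costs at the optimal basis), while the paper gives the self-contained contradiction argument: if $\mbr^*_a>0$, swap $a$ into $\pi^*$ and apply the first identity to show $\pi^*$ was not optimal. Both are valid; the paper's is more self-contained, yours leans on the textbook LP fact, and your remark that the defining system for $\mbv^\pi$ coincides with the basic dual solution at basis $\pi$ is exactly the right thing to make that appeal airtight.
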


\begin{proof}
From the definition of gains $(\mbr^\pi)^T\mbx^{\pi'} = (\mbr - (M - \gamma P)^T\mbv^\pi)^T\mbx^{\pi'} = \mbr^T\mbx^{\pi'} - (\mbv^\pi)^T(M-\gamma P) \mbx^{\pi'} = \mbr^T\mbx^{\pi'} - (\mbv^{\pi})^T\mb{1}$, using that $(M-\gamma P)$ is the constraint matrix of \eqref{eq_primal}. 
From the definition of value and flux vectors $\mbr^T\mbx^{\pi} = \mbr_\pi^T (I-\gamma P^\pi)^{-1}\mb{1} = (\mbv^\pi)^T \mb{1}$, where $\mbr_\pi$ is the reward vector restricted to indices $\pi$. Combining these two gives the first result.

For the second result, if $a$ is in $\pi$, then $\mbv^\pi_s = \mbr_a + \gamma P_a^T \mbv^\pi$, so $\mbr^\pi_a = 0$.
Finally, if $\mbr^*_a > 0$ for some $a$, then consider the policy $\pi$ that is identical to $\pi^*$ but uses $a$. Then $(\mbr^*)^T\mbx^{\pi} > 0$, and the first identity proves that $\pi^*$ is not optimal.
\end{proof}

A key property of the simplex method on MDPs that we will employ repeatedly is that not only is the overall objective improving, but also the values of all states are monotone non-decreasing, and there exists a single policy we denote by $\pi^*$ that maximizes the values of all states:

\begin{lem}
\label{lem_monotone}
Let $\pi$ and $\pi'$ be policies appearing in an execution of the simplex method with $\pi'$ being used after $\pi$. Then $\mbv^{\pi'} \ge \mbv^{\pi}$. Further, let $\pi^*$ be the policy when simplex terminates, and $\pi''$ be any other policy. Then $\mbv^* \ge \mbv^{\pi''}$.
\end{lem}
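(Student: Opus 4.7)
The plan is to prove the two claims separately, both via the Bellman operator associated with a given policy, which is easier to control coordinate-wise than the scalar identity in Lemma~\ref{lem_reducedcosts}.

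For the monotonicity claim, suppose a simplex iteration pivots from $\pi$ to $\pi'$ by swapping in a single action $a$ at state $s$, where by the pivot rule $\mbr^\pi_a > 0$. Let $T'(\mbv) = \mbr_{\pi'} + \gamma (P^{\pi'})^T \mbv$ denote the Bellman operator for $\pi'$; by \eqref{eq_value_vector} its unique fixed point is $\mbv^{\pi'}$. I would show $T'(\mbv^\pi) \ge \mbv^\pi$ coordinate-wise: at any state $s' \ne s$ the policies $\pi$ and $\pi'$ agree, so the $s'$-coordinate of $T'(\mbv^\pi)$ equals $\mbv^\pi_{s'}$ by definition of the value vector; at state $s$ the $s$-coordinate of $T'(\mbv^\pi)$ exceeds $\mbv^\pi_s$ by exactly $\mbr^\pi_a > 0$. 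Because $\gamma (P^{\pi'})^T$ has nonnegative entries, $T'$ is monotone, so iterating gives $(T')^k(\mbv^\pi) \ge \mbv^\pi$ for every $k$. Taking $k \to \infty$ and using that $T'$ is a contraction with fixed point $\mbv^{\pi'}$ yields $\mbv^{\pi'} \ge \mbv^\pi$, which chains across all iterations of the simplex method.

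For the terminal optimality claim, when simplex stops all actions satisfy $\mbr^*_a \le 0$ by Lemma~\ref{lem_reducedcosts}, which by \eqref{eq_reducedcost} is exactly the dual feasibility condition $\mbv^*_s \ge \mbr_a + \gamma P_a^T \mbv^*$ for every $s$ and every $a \in A_s$. For an arbitrary competing policy $\pi''$, restricting this inequality to the single action $\pi''(s)$ chosen in each state $s$ gives $\mbv^* \ge \mbr_{\pi''} + \gamma (P^{\pi''})^T \mbv^* = T''(\mbv^*)$ coordinate-wise, where $T''$ is the Bellman operator for $\pi''$. Applying the monotone operator $T''$ repeatedly produces a nonincreasing sequence $\mbv^* \ge T''(\mbv^*) \ge (T'')^2(\mbv^*) \ge \cdots$ that converges to the unique fixed point $\mbv^{\pi''}$, so $\mbv^* \ge \mbv^{\pi''}$.

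The only delicate step is the first one: one might be tempted to deduce coordinate-wise monotonicity directly from the identity $(\mbr^\pi)^T \mbx^{\pi'} = \mb{1}^T \mbv^{\pi'} - \mb{1}^T \mbv^\pi$ in Lemma~\ref{lem_reducedcosts}, but this only gives nonnegativity of the sum of improvements. The Bellman/contraction argument above—using that $(I - \gamma P^{\pi'})^{-1} = \sum_{i \ge 0} (\gamma P^{\pi'})^i$ has nonnegative entries—is what upgrades it to the required coordinate-wise inequality, and it is the only place one has to be careful.
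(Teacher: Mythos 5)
Your argument is correct and is essentially the paper's own proof in different clothing: the paper establishes $\mbr_{\pi'} - (I-\gamma P^{\pi'})^T\mbv^\pi \ge 0$ (your $T'(\mbv^\pi) \ge \mbv^\pi$) and then left-multiplies by the nonnegative matrix $(I-\gamma P^{\pi'})^{-T} = \sum_{i\ge 0}(\gamma (P^{\pi'})^T)^i$, which is exactly the closed-form version of your iterate-and-take-limits argument. The second half is likewise identical, using $\mbr^* \le 0$ and the same nonnegative series/resolvent.
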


\begin{proof}
Suppose $\pi$ and $\pi'$ are subsequent policies.
The gains of all actions in $\pi'$ with respect to $\pi$ are equal to $\mbr_{\pi'} - (I - \gamma P^{\pi'})^T\mbv^{\pi}$, all of which are nonnegative. Therefore $\mb{0} \le (I-\gamma P^{\pi'})^{-T}(\mbr_{\pi'} - (I - \gamma P^{\pi'})^T)\mbv^{\pi} = \mbv^{\pi'} - \mbv^\pi$, using that $(I-\gamma P^{\pi'})^{-T} = \sum_{i\ge 0} (\gamma (P^\pi)^T)^i \ge \mb{0}$. By induction, this holds if $\pi$ and $\pi'$ occur further apart.
Performing a similar calculation using the gains $\mbr^*$, which are nonpositive, shows that $\mbv^* - \mbv^{\pi''} \ge \mb{0}$ for any policy $\pi''$.
\end{proof}

\section{Uniform discount}

\label{section_uniform}

As a warmup before delving into our analysis of deterministic MDPs, we briefly review the analysis of \cite{ye_simplex} for stochastic MDPs with a fixed discount. Consider the flux vector in Definition \ref{def_flux}. One unit of flux is added to each state, and every step it is discounted by a factor of $\gamma$, for a total of $n(1 + \gamma + \gamma^2 + \cdots) = n/(1-\gamma)$ flux overall. If $\pi$ is the current policy and $\Delta$ is the highest gain, then, by Lemma \ref{lem_reducedcosts} the farthest $\pi$ can be from $\pi^*$ is if all $n/(1-\gamma)$ units of flux in $\pi^*$ are on the action with gain $\Delta$, so $\mbr^T\mbx^* - \mbr^T\mbx^\pi \le n\Delta/(1-\gamma)$. If we pivot on this action, at least 1 unit of flux is placed on the new action, increasing the objective by at least $\Delta$. Thus we have reduced the gap to $\pi^*$ by a $1 - (1-\gamma)/n$ fraction, which is substantial if $1/(1-\gamma)$ is polynomial. 

Now consider $\mbr^T\mbx^* - \mbr^T\mbx^\pi  = -(\mbr^*)^T\mbx^\pi$. All the terms $-\mbr^*_a \mbx^\pi_a$ are nonnegative, and for some action $a$ in $\pi$ we have $-\mbr^*_a \mbx^\pi_a \ge -(\mbr^*)^T\mbx^\pi / n$. The term $-\mbr^*_a \mbx^\pi_a$ is at most $-\mbr^*_a n/(1-\gamma)$, so $-\mbr^*_a \ge -(\mbr^*)^T\mbx^\pi / (n^2/(1-\gamma))$. But for any policy $\pi'$ that includes $a$, $-(\mbr^*)^T\mbx^{\pi'} \ge -\mbr^*_a \mbx^{\pi'}_a \ge - \mbr^*_a$, so after $\mbr^T\mbx^* - \mbr^T\mbx^\pi$ has shrunk by a factor of $n^2/(1-\gamma)$, action $a$ cannot appear in any future policy, and this occurs after
\[
\log_{1-(1-\gamma)/n} \frac{1-\gamma}{n^2} = O\left( \frac{n}{1-\gamma}\log \frac{n}{1-\gamma} \right)
\]
steps.
See \cite{ye_simplex} for the details.

The above result hinged on the fact that the size of all nonzero flux lay within the interval $[1,n/(1-\gamma)]$, which was assumed to be polynomial but gives a weak bound if $\gamma $ is very close to 1. However, consider a policy for a deterministic MDP. It can be seen as a graph with a node for each state with a single directed edge leaving each state representing the action, so the graph consists of one or more directed cycles and directed paths leading to these cycles. Starting on a path, the MDP uses each path action once before reaching a cycle, so the flux on paths must be small. Flux on the cycles may be substantially larger, but since the MDP revisits each action after at most $n$ steps, the flux on cycle actions varies by at most a factor of $n$.

\begin{lem}
\label{lem_flux_size}
Let $\pi$ be a policy with flux vector $\mbx^\pi$ and $a$ an action in $\pi$. If $a$ is on a path in $\pi$ then $1 \le \mbx^\pi_a \le n$, and if $a$ is on a cycle then $1/(1-\gamma) \le \mbx^\pi_a \le n/(1-\gamma)$. The total flux on paths is at most $n^2$, and the total flux on cycles is at most $n/(1-\gamma)$.
\end{lem}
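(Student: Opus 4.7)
The plan is to exploit the fact that in a deterministic MDP a policy $\pi$ corresponds to a functional graph: each state has exactly one outgoing action, so the subgraph of $\pi$ decomposes into vertex-disjoint cycles with trees hanging off cycle nodes. I would work directly from Definition~\ref{def_flux}, writing $\mbx^\pi_a = \mb{y}_s = \sum_{i\ge 0}\gamma^i \bigl((P^\pi)^i\mb{1}\bigr)_s = \sum_{i\ge 0}\gamma^i\, n_i(s)$, where $n_i(s)$ is the number of states that reach $s$ in exactly $i$ steps under $\pi$. The $i=0$ term is $1$, which immediately gives the universal lower bound $\mbx^\pi_a\ge 1$.

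For a path (tree) action $a$ at a state $s$ off the cycles, each state $s'$ has a unique forward orbit and so visits $s$ at most once; only states in the sub-tree whose unique forward path passes through $s$ can ever reach $s$, and there are at most $n$ of them. Each contributes at most $\gamma^i\le 1$ to $\mb{y}_s$, so $\mb{y}_s\le n$. The upper bound for cycle actions I would obtain by a global counting argument: since in the deterministic case every column of $P^\pi$ is a standard basis vector, $\mb{1}^T P^\pi=\mb{1}^T$, so $\mb{1}^T\mb{y}=\sum_{i\ge 0}\gamma^i\,\mb{1}^T(P^\pi)^i\mb{1}=\sum_{i\ge 0}\gamma^i n = n/(1-\gamma)$. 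Since $\mb{y}\ge\mb{0}$, every individual entry, and hence every flux, is at most $n/(1-\gamma)$. This also supplies the aggregate bound: the total flux on cycles is at most the grand total $n/(1-\gamma)$, and there are at most $n$ path actions each of flux at most $n$, giving total path flux at most $n^2$.

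The step I expect to be the most delicate is the cycle lower bound $\mbx^\pi_a\ge 1/(1-\gamma)$, because the naive argument "$s$ revisits itself every $\ell$ steps" only yields $1/(1-\gamma^\ell)$, which is weaker when $\ell>1$. The trick is to sum contributions across the whole cycle: if $s$ lies on a cycle of length $\ell$, the $\ell$ cycle states reach $s$ at offsets $k, k+\ell, k+2\ell,\dots$ for $k\in\{0,\dots,\ell-1\}$, contributing $\gamma^k/(1-\gamma^\ell)$ each, and these telescope:
\[
\mb{y}_s \;\ge\; \frac{1}{1-\gamma^\ell}\sum_{k=0}^{\ell-1}\gamma^k \;=\; \frac{1-\gamma^\ell}{(1-\gamma^\ell)(1-\gamma)} \;=\; \frac{1}{1-\gamma}.
\]
Dropping the nonnegative tree-state contributions finishes the lower bound and hence the lemma.
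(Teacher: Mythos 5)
Your proof is correct and follows essentially the same route as the paper: it argues flux $\ge 1$ from the initial mass, path flux $\le n$ because each state passes through a path action at most once, the grand total $\mb{1}^T\mb{y}=n/(1-\gamma)$ as a global upper bound (and bound on total cycle flux), and the cycle lower bound $1/(1-\gamma)$ by aggregating the contributions of the $\ell$ cycle states. The only difference is presentational: where the paper appeals to symmetry (``each cycle state contributes $1/(1-\gamma)$ flux to the cycle, distributed evenly among its actions''), you make the same calculation explicit with the telescoping sum $\sum_{k=0}^{\ell-1}\gamma^k/(1-\gamma^\ell)=1/(1-\gamma)$, which is a slightly tighter and more transparent write-up of the paper's informal argument.
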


\begin{proof}
All actions have at least 1 flux. If $a$ is on a path, then starting from any state we can only use $a$ once and never return, contributing flux at most 1 per state, so $\mbx^\pi_a \le n$. Summing over all path actions, the total flux is at most $n^2$.

If $a$ is on a cycle, each state on the cycle contributes a total of $1/(1-\gamma)$ flux to the cycle. By symmetry this flux is distributed evenly among actions on the cycle, so $\mbx^\pi_a \ge 1/(1-\gamma)$. The total flux in the MDP is $n/(1-\gamma)$, so $\mbx^\pi_a \le n/(1-\gamma)$.
\end{proof}

The overall range of flux is large, but all values must lie within one of two polynomial layers. We will prove that simplex can essentially optimize each layer separately. If a cycle is not updated, then not much progress is made towards the optimum, but we make a substantial amount of progress in optimizing the paths for the current cycles. When the paths are optimal the algorithm is forced to update a cycle, at which point we make a substantial amount of progress towards the optimum but resets all progress on the paths.

First we analyze progress on the paths:

\begin{lem}
\label{lem_path_converge}
Suppose the simplex method pivots from $\pi$ to $\pi'$, which does not create a new cycle. Let $\pi''$ be the final policy such that cycles in $\pi''$ are a subset of those in $\pi$ (i.e., the final policy before a new cycle is created).
Then $\mbr^T(\mbx^{\pi''} - \mbx^{\pi'}) \le (1-1/n^2)\mbr^T(\mbx^{\pi''} - \mbx^{\pi})$.
\end{lem}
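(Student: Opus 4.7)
The plan is to adapt Ye's argument from the uniformly-discounted stochastic case, replacing the crude bound $n/(1-\gamma)$ on total flux with the much smaller bound $n^2$ on total \emph{path} flux provided by Lemma \ref{lem_flux_size}. Let $a^*$ be the action brought into the basis at this pivot and let $\Delta = \mbr^\pi_{a^*} = \max_a \mbr^\pi_a$ be its gain, which is positive since $\pi \ne \pi''$.

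First I would lower-bound the one-step improvement $\mbr^T(\mbx^{\pi'} - \mbx^{\pi})$ by $\Delta$. By Lemma \ref{lem_reducedcosts} this difference equals $(\mbr^\pi)^T \mbx^{\pi'} = \sum_{a \in \pi'} \mbr^\pi_a \mbx^{\pi'}_a$, and since every action in $\pi' \cap \pi$ has zero gain with respect to $\pi$, only the entering action $a^*$ survives. By hypothesis the pivot does not create a new cycle, so $a^*$ lies on a path of $\pi'$, and Lemma \ref{lem_flux_size} then gives $\mbx^{\pi'}_{a^*} \ge 1$, proving $\mbr^T(\mbx^{\pi'} - \mbx^\pi) \ge \Delta$.

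Next I would upper-bound the gap $G := \mbr^T(\mbx^{\pi''} - \mbx^\pi) = (\mbr^\pi)^T \mbx^{\pi''}$ by $n^2 \Delta$. This is the step where the cycle-subset assumption does the real work: because every cycle of $\pi''$ is already a cycle of $\pi$, the states on those cycles use the same actions in both policies, so every cycle action of $\pi''$ belongs to $\pi$ and hence has zero gain with respect to $\pi$. Consequently only path actions of $\pi''$ can contribute to $(\mbr^\pi)^T \mbx^{\pi''}$; dropping the terms with nonpositive gain, each surviving gain is at most $\Delta$, and by Lemma \ref{lem_flux_size} the total flux that $\pi''$ puts on path actions is at most $n^2$. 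Hence $G \le n^2 \Delta$, i.e.\ $\Delta \ge G/n^2$.

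Combining the two bounds,
\[
\mbr^T(\mbx^{\pi''} - \mbx^{\pi'}) = G - \mbr^T(\mbx^{\pi'} - \mbx^{\pi}) \le G - \Delta \le (1 - 1/n^2)\,G,
\]
which is the claimed contraction (note that $G \ge 0$ by the monotonicity result of Lemma \ref{lem_monotone} applied to the objective $\mbr^T\mbx = \mb{1}^T\mbv$). The only genuinely nontrivial step is the second: recognizing that the cycle-subset hypothesis on $\pi''$ collapses the cycle terms out of the gain sum, leaving the gap controlled by the polynomially-small path-flux total rather than by the $\gamma$-dependent cycle flux — this is exactly what makes the contraction factor independent of $\gamma$.
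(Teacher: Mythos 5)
Your proof is correct and follows essentially the same route as the paper: bound the gap $(\mbr^\pi)^T\mbx^{\pi''}$ by $n^2\Delta$ using that cycle actions of $\pi''$ are in $\pi$ (hence zero gain) and path flux totals at most $n^2$, then subtract the one-step improvement of at least $\Delta$. The only difference is that you spell out a few steps the paper leaves implicit; a tiny nit is that $\Delta>0$ follows from the fact that simplex is pivoting, not directly from $\pi\ne\pi''$, but this doesn't affect the argument.
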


\begin{proof}
Let $\Delta = \max_a \mbr^\pi_a$ be the highest gain. 
Consider $(\mbr^{\pi})^T\mbx^{\pi''}$. 
Since cycles in $\pi''$ are contained in $\pi$, $\mbr^{\pi}_a = 0$ for any action $a$ on a cycle in $\pi''$, and by Lemma \ref{lem_flux_size}, $\pi''$ has at most $n^2$ units of flux on paths, so $(\mbr^{\pi})^T\mbx^{\pi''} = \mbr^T(\mbx^{\pi''} - \mbx^\pi) \le n^2\Delta$.


Policy $\pi'$ has at least 1 unit of flux on the action with gain $\Delta$, so
\[
\mbr^T(\mbx^{\pi''} - \mbx^{\pi'}) \le \mbr^T(\mbx^{\pi''} - \mbx^{\pi}) - \Delta \le \left(1-\frac{1}{n^2}\right)\mbr^T(\mbx^{\pi''} - \mbx^{\pi}) \; . \qedhere
\]
\end{proof}

Due to the polynomial contraction in the lemma above, not too many iterations can pass before a new cycle is formed.

\begin{lem}
\label{lem_path_elimination}
Let $\pi$ be a policy. After $O(n^2\log n)$ iterations starting from $\pi$, either the algorithm finishes, a new cycle is created, a cycle is broken, or some action in $\pi$ never appears in a policy again until a new cycle is created.
\end{lem}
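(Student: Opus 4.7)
The plan is to prove the fourth conclusion assuming the first three fail within the first $K := \lceil 2n^2\log n\rceil$ iterations. Let $\pi_0=\pi,\pi_1,\ldots$ be the simplex sequence and let $\pi''=\pi_{K''}$ be the last policy whose directed cycles coincide with those of $\pi$ (the last one before either a new cycle appears or an existing one is broken). By hypothesis no such event occurs in the first $K$ steps, so $K''\ge K$, and every $\pi_j$ with $j\le K''$ shares its cycle actions with $\pi$. Set $G_j := \mbr^T(\mbx^{\pi''}-\mbx^{\pi_j})$, which is nonnegative and nonincreasing in $j$ by Lemma \ref{lem_monotone}.

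Iterating Lemma \ref{lem_path_converge} yields $G_j\le(1-1/n^2)^jG_0$, so $G_K<G_0/n^2$ (the degenerate case $G_0=0$ forces $\pi=\pi''$ by nondegeneracy and triggers one of the first three conclusions on the very next pivot). Next, the first identity of Lemma \ref{lem_reducedcosts} rewrites $G_0=-(\mbr^{\pi''})^T\mbx^\pi=\sum_{a\in\pi\setminus\pi''}(-\mbr^{\pi''}_a)\,\mbx^\pi_a$; cycle actions of $\pi$ also lie in $\pi''$ since the cycles agree, so these summands range only over path actions of $\pi$, for which Lemma \ref{lem_flux_size} gives $\mbx^\pi_a\le n$. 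Pigeonholing the at most $n$ nonzero terms produces an action $a^*\in\pi\setminus\pi''$ with $-\mbr^{\pi''}_{a^*}\ge G_0/n^2$.

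The crux of the argument is the elimination inequality: whenever $a^*\in\pi_j$ with $j\le K''$, we must have $-\mbr^{\pi''}_{a^*}\le G_j$. Let $s^*$ be the state of $a^*$. Since $a^*\in\pi_j$ the Bellman identity \eqref{eq_value_vector} gives $\mbv^{\pi_j}_{s^*}=\mbr_{a^*}+\gamma P_{a^*}^T\mbv^{\pi_j}$, while the definition of the gain \eqref{eq_reducedcost} gives $\mbr^{\pi''}_{a^*}=\mbr_{a^*}+\gamma P_{a^*}^T\mbv^{\pi''}-\mbv^{\pi''}_{s^*}$. Subtracting yields
\[
-\mbr^{\pi''}_{a^*} \;=\; \bigl(\mbv^{\pi''}_{s^*}-\mbv^{\pi_j}_{s^*}\bigr)\;-\;\gamma P_{a^*}^T\bigl(\mbv^{\pi''}-\mbv^{\pi_j}\bigr).
\]
Componentwise monotonicity (Lemma \ref{lem_monotone}) makes the second term nonnegative, so $-\mbr^{\pi''}_{a^*}\le \mbv^{\pi''}_{s^*}-\mbv^{\pi_j}_{s^*}\le\mb{1}^T(\mbv^{\pi''}-\mbv^{\pi_j})=G_j$, the final equality again being Lemma \ref{lem_reducedcosts}.

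Combining these, for every $j$ with $K\le j\le K''$ we get $G_j\le G_K<G_0/n^2\le -\mbr^{\pi''}_{a^*}$, whence $a^*\notin\pi_j$. Thus $a^*$ vanishes from every policy between iteration $K$ and the first cycle event, which is exactly the fourth conclusion. The main obstacle is the elimination inequality of the third paragraph: it is the one nonroutine estimate and is what lets us upgrade a global $\ell^1$-type bound on the value gap $\mb{1}^T(\mbv^{\pi''}-\mbv^{\pi_j})$ into a per-action reduced-cost bound, by exploiting a single state's Bellman equation together with the componentwise monotonicity of the simplex value sequence.
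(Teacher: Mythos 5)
Your argument is mechanically sound up to a point, and it correctly reproduces the key steps of the paper's ``path action'' branch: the elimination inequality (your third paragraph) is exactly the estimate the paper uses in its first case, derived from the Bellman identity at the state of $a^*$ together with componentwise monotonicity; and your pigeonhole over path actions of $\pi$ and the iteration of Lemma~\ref{lem_path_converge} are the right ingredients.

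However, there is a genuine gap, and it comes from your choice of reference policy. You take $\pi''$ to be the \emph{last policy whose cycle set is identical to that of $\pi$}, i.e.\ the last policy before a cycle is either broken or created. The paper instead anchors the argument at the last policy before a new cycle is \emph{created}, allowing cycle breaks to occur along the way. This choice matters. Your elimination inequality $-\mbr^{\pi''}_{a^*} \le \mb{1}^T(\mbv^{\pi''}-\mbv^{\pi_j})$ requires $\mbv^{\pi''}\ge\mbv^{\pi_j}$ componentwise (to drop the $\gamma P_{a^*}^T(\mbv^{\pi''}-\mbv^{\pi_j})$ term), which is only available when $\pi''$ occurs \emph{after} $\pi_j$, i.e.\ for $j\le K''$. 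Consequently you show $a^*\notin\pi_j$ only for $K\le j\le K''$. If the pivot from $\pi_{K''}$ to $\pi_{K''+1}$ \emph{breaks} a cycle (rather than creating one), nothing in your argument prevents $a^*$ from reappearing in $\pi_{K''+1}$ or later, long before the next cycle creation. Your closing claim that this is ``exactly the fourth conclusion'' is therefore not correct: the lemma requires the action to be absent until a new cycle is \emph{created}, not merely until the next cycle event of either kind. (Nor is ``a cycle is broken'' the escape hatch, since the break may happen at iteration $K''+1\gg K$, well past the $O(n^2\log n)$ window.) This weaker conclusion would also degrade the downstream accounting in Lemma~\ref{lem_new_cycle}, since eliminations that can be undone by breaks cost an extra factor in the bound.

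The paper avoids this by pigeonholing against the \emph{farther} reference policy $\pi'$ (the last before a creation) and splitting cases on where the mass of $-(\mbr^{\pi'})^T\mbx^\pi$ sits: if a path action of $\pi$ carries a $1/n$ fraction, the per-state Bellman estimate (which you already have) eliminates it all the way to the next creation; if instead a cycle of $\pi$ carries a $1/n$ fraction, a separate flux argument using Lemma~\ref{lem_flux_size} forces that cycle to break within $O(n^2\log n)$ iterations. Your proof collapses the two cases by choosing $\pi''$ so that cycle actions of $\pi$ contribute zero gain; that simplification is clean but is precisely what loses the persistence past a break. To repair the argument you would need to anchor the contraction and the elimination inequality against the last-before-creation policy instead, which reintroduces cycle actions of $\pi\setminus\pi''$ into the sum and hence the paper's two-case analysis.
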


\begin{proof}
Let $\pi$ be the policy in some iteration, $\pi'$ the last policy before a new cycle is created, and $\pi''$ an arbitrary policy occurring between $\pi$ and $\pi'$ in the algorithm.
Policy $\pi$ differs from $\pi'$ in actions on paths and possibly in cycles that exist in $\pi$ but have been broken in $\pi'$. By Lemma \ref{lem_reducedcosts} $-(\mbr^{\pi'})^T\mbx^\pi = \mbr^T(\mbx^{\pi'}- \mbx^{\pi}) = \mb{1}^T(\mb{v}^{\pi'}-\mb{v}^{\pi})$.

We divide the analysis into two cases. First suppose that there exists an action $a$ used in state $s$ on a path such that $-\mbr^{\pi'}_a\mbx^{\pi}_a \ge  -(\mbr^{\pi'})^T\mbx^{\pi}/n$  (note $(\mbr^{\pi'})^T\mbx^\pi \le 0$). Since $a$ is on a path $\mbx^\pi_a \le n$, which implies $-\mbr^{\pi'}_a n^2 \ge -(\mbr^{\pi'})^T\mbx^{\pi}$. Now if policy $\pi''$ uses action $a$, then
\begin{align*}
-(\mbr^{\pi'})^T\mbx^{\pi''} = 
\mb{1}^T(\mb{v}^{\pi'}-\mb{v}^{\pi''}) \ge 
\mb{v}^{\pi'}_s - \mb{v}^{\pi''}_s =
& \mb{v}^{\pi'}_s - (\mbr_a + \gamma P_a \mb{v}^{\pi''}) \\ \ge
& \mb{v}^{\pi'}_s - (\mbr_a + \gamma P_a \mb{v}^{\pi'}) =
-\mbr^{\pi'}_a \ge 
-\frac{-(\mbr^{\pi'})^\pi\mbx^{\pi}}{n^2} \; ,
\end{align*}
using that the values of all states are monotone increasing. 

In the second case there is no action $a$ on a path in $\pi$ satisfying $-\mbr^{\pi'}_a\mbx^{\pi}_a \ge  -(\mbr^{\pi'})^T\mbx^{\pi}/n$. The remaining portion of $-(\mbr^{\pi'})^T\mbx^{\pi}$ is due to cycles, so there must be some cycle $C$ consisting of actions $\{a_1, \ldots, a_k\}$ used in states $\{s_1,\ldots, s_k\}$ such that $\sum_{a \in C} -\mbr^{\pi'}_a \mbx^\pi_a \ge -(\mbr^{\pi'})^T\mbx^{\pi}/n$. 

All flux in $C$ first enters $C$ either from a path ending at $C$ or from the initial unit of flux placed on some state $s$ in $C$. If $y_s \ge 1$ units of flux first enter $C$ at state $s$ in policy $\pi$, then that flux earns $y_s(\mbv^{\pi'}_s-\mbv^{\pi}_s)$ reward with respect to the rewards $-\mbr^{\pi'}$, so $\sum_{a \in C} -\mbr^{\pi'}_a \mbx^\pi_a = \sum_{s \in C} y_s(\mbv^{\pi'}_s-\mbv^{\pi}_s)$. Moreover, each term $\mbv^{\pi'}_s - \mbv^{\pi}_s$ is nonnegative, since the values of all states are nondecreasing. 
Now note that $\sum_{s \in C} (\mbv^{\pi'}_s - \mbv^{\pi}_s) = \sum_{a \in C} -\mbr^{\pi'}_a/(1-\gamma)$, and at most $n$ units of flux enter each state from outside. Therefore $-n \sum_{a \in C} \mbr^{\pi'}_a/(1-\gamma) \ge \sum_{a \in C} -\mbr^{\pi'}_a \mbx^\pi_a$, implying $-n^2 \sum_{a \in C} \mbr^{\pi'}_a/(1-\gamma) \ge -(\mbr^{\pi'})^T\mbx^{\pi}$.

As long as cycle $C$ is intact, each $a \in C$ has $1/(1-\gamma)$ flux from states in $C$ (Lemma \ref{lem_flux_size}), so if $C$ is in policy $\pi''$ then
\begin{equation}
\label{eqn_cycle_breaking}
-(\mbr^{\pi'})^T\mbx^{\pi''} =
\mb{1}^T(\mb{v}^{\pi'}-\mb{v}^{\pi''}) \ge 
\sum_{s \in C} \mb{v}^{\pi'}_s - \mb{v}^{\pi''}_s =
 -\frac{\sum_{a \in C} \mbr^{\pi''}_a}{1-\gamma} \ge
 -\frac{-(\mbr^{\pi'})^T\mbx^{\pi}}{n^2} \; .
\end{equation}

Now if $\log_{n^2/(n^2-1)} n^2$ iterations occur between $\pi$ and $\pi''$, Lemma \ref{lem_path_converge} implies
\[
-(\mbr^{\pi'})^T\mbx^{\pi''} <
-\left(1-\frac{1}{n^2}\right)^{\log_{n^2/(n^2-1)} n^2} (\mbr^{\pi'})^T\mbx^{\pi} \le
 -\frac{-(\mbr^{\pi'})^T\mbx^{\pi}}{n^2} \; .
\]
In the first case action $a$ cannot appear in $\pi''$, and in the second case cycle $C$ must be broken broken in $\pi''$.
This takes $\log_{n^2/(n^2-1)} n^2 = O(n^2\log n)$ iterations if no new cycles interrupt the process.
\end{proof}

\begin{lem}
\label{lem_new_cycle}
Either the algorithm finishes or a new cycle is created after $O(n^2 m \log n)$ iterations.
\end{lem}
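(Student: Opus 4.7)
The plan is to iterate Lemma \ref{lem_path_elimination} as a sequence of phases, each of length $O(n^2 \log n)$, and bound the total number of phases that can elapse before either the algorithm terminates or a new cycle is created.

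Concretely, starting from any policy, Lemma \ref{lem_path_elimination} guarantees that within $O(n^2 \log n)$ iterations one of four things occurs: (i) the algorithm terminates, (ii) a new cycle is created, (iii) a cycle present in the starting policy of the phase is broken, or (iv) some action in that starting policy never reappears in any subsequent policy until a new cycle is created. I would invoke the lemma repeatedly, restarting each time at the current policy, and show that under the assumption that events (i) and (ii) never happen, only $O(m)$ phases can take place.

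Under this assumption, event (iv) permanently removes an action from every subsequent policy, so it can account for at most $m$ phases in total. For event (iii), observe that every pivot changes a single edge of the functional graph defined by the current policy, and a single-edge change in a functional graph either leaves the cycle structure intact, destroys exactly one cycle, creates exactly one cycle, or simultaneously destroys one and creates another. Excluding the creation of any new cycle leaves only the two non-creating possibilities, so the collection of cycles is monotone non-increasing over the execution. Since a functional graph on $n$ vertices contains at most $n$ pairwise vertex-disjoint cycles, event (iii) can occur at most $n$ times before all cycles have been destroyed (at which point, since every functional graph must contain a cycle, we would reach a contradiction).

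Adding the two counts gives $O(n + m) = O(m)$ phases (using that $m \ge n$ since every state has at least one action), and multiplying by $O(n^2 \log n)$ iterations per phase yields the claimed $O(n^2 m \log n)$ bound. The only mildly subtle ingredient is the monotonicity of the cycle collection in the absence of cycle creation, but this is immediate from the single-edge change analysis; otherwise the argument is just a careful bookkeeping on top of Lemma \ref{lem_path_elimination}.
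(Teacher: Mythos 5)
Your proof is correct and takes essentially the same route as the paper: iterate Lemma~\ref{lem_path_elimination} in phases of $O(n^2\log n)$ iterations and, under the assumption that neither termination nor cycle creation occurs, bound cycle-breakings by $n$ and permanent action eliminations by $m$ to get $O(m+n)=O(m)$ phases. The paper phrases the cycle count as ``at most $n$ cycles in $\pi_0$'' rather than via an explicit single-edge-change case analysis, but the underlying monotonicity observation and the resulting $O(n^2 m\log n)$ bound are the same.
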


\begin{proof}
Let $\pi_0$ be a policy after a new cycle is created, and consider the policies $\pi_1, \pi_2, \ldots$ each separated by $O(n^2\log n)$ iterations. If no new cycle is created, then by Lemma \ref{lem_path_elimination} each of these policies $\pi_i$ has either broken another cycle in $\pi_0$ or contains an action that cannot appear in $\pi_{j}$ for all $j >i$. There are at most $n$ cycles in $\pi_0$ and at most $m$ actions that can be eliminated, so after $(m+n)O(n^2 \log n) = O(n^2 m \log n)$ iteration, the algorithm must terminate or create a new cycle.
\end{proof}

When a new cycle is formed, the algorithm makes a substantial amount of progress towards the optimum but also resets the path optimality above.

\begin{lem}
\label{lem_cycle_converge}
Let $\pi$ and $\pi'$ be subsequent policies such that $\pi'$ creates a new cycle.
Then $\mbr^T(\mbx^* - \mbx^{\pi'}) \le (1-1/n)\mbr^T(\mbx^* - \mbx^\pi)$.
\end{lem}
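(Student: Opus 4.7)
The plan is to mirror the warmup argument for the discounted stochastic case given earlier in this section: bound the current optimality gap from above by $n\Delta/(1-\gamma)$, where $\Delta$ is the highest reduced cost, and bound the one-pivot improvement from below by $\Delta/(1-\gamma)$; dividing then gives the required $(1-1/n)$ contraction.

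First I would set $\Delta = \max_a \mbr^\pi_a$, let $a^*$ denote the action simplex pivots into at some state $s$, and invoke Lemma \ref{lem_reducedcosts} to rewrite the gap as $\mbr^T\mbx^* - \mbr^T\mbx^\pi = (\mbr^\pi)^T\mbx^*$. Since $\mbr^\pi_a \le \Delta$ for every $a$ (with equality zero on $a \in \pi$), $\mbx^* \ge \mb{0}$, and the total flux of any policy is $\mb{1}^T\mbx^{\pi^*} = n/(1-\gamma)$ (because $P^{\pi^*}$ is column-stochastic, so $\mb{1}^T(I-\gamma P^{\pi^*})^{-1}\mb{1} = n/(1-\gamma)$), this yields the upper bound $(\mbr^\pi)^T\mbx^* \le n\Delta/(1-\gamma)$.

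Next I would use the same identity to express the improvement as $\mbr^T\mbx^{\pi'} - \mbr^T\mbx^\pi = (\mbr^\pi)^T\mbx^{\pi'}$. Because $\pi$ and $\pi'$ differ only at state $s$ and $\mbr^\pi$ vanishes on actions of $\pi$, this collapses to $\Delta\,\mbx^{\pi'}_{a^*}$. The key structural observation is that the newly created cycle in $\pi'$ must contain $a^*$: simplex alters only the outgoing action at $s$, so any cycle present in $\pi'$ but absent from $\pi$ must traverse the new edge $a^*$. Consequently Lemma \ref{lem_flux_size} gives $\mbx^{\pi'}_{a^*} \ge 1/(1-\gamma)$, so the improvement is at least $\Delta/(1-\gamma) \ge (1/n)\,\mbr^T(\mbx^* - \mbx^\pi)$. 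Subtracting this from the gap produces the claimed contraction.

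The only nontrivial point is placing $a^*$ on the newly created cycle, and this is forced by the single-edge nature of a simplex pivot on this LP. Everything else is a routine substitution into Lemmas \ref{lem_reducedcosts} and \ref{lem_flux_size} together with the total-flux identity; I do not anticipate any other obstacle.
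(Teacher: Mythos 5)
Your proof is correct and follows essentially the same argument as the paper: bound the gap by $n\Delta/(1-\gamma)$ using total flux, then lower-bound the one-step improvement by $\Delta/(1-\gamma)$ via Lemma \ref{lem_flux_size} because the entering action lies on the newly created cycle. The paper simply asserts that the new cycle runs through the pivoted action, whereas you spell out the (correct) single-switch reason; otherwise the two arguments coincide.
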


\begin{proof}
Let $\Delta = \max_{a'} \mbr^\pi_{a'}$ and $a = \argmax_{a'} \mbr^\pi_a$.  There is a total of $n/(1-\gamma)$ flux in the MDP, so $\mbr^T\mbx^* - \mbr^T\mbx^{\pi} = (\mbr^\pi)^T\mbx^* \le \Delta n/(1-\gamma)$. 
By Lemma \ref{lem_flux_size}, pivoting on $a$ and creating a cycle will result in at least $1/(1-\gamma)$ flux through $a$.
Therefore $\mbr^T\mbx^{\pi'} \ge \mbr^T\mbx^\pi + \Delta/(1-\gamma)$, so
\[
\mbr^T(\mbx^* - \mbx^{\pi'}) \le \mbr^T(\mbx^* - \mbx^{\pi}) - \frac{\Delta}{1-\gamma} \le \left(1-\frac{1}{n}\right)\mbr^T(\mbx^* - \mbx^{\pi}) \; . \qedhere
\]
\end{proof}

\begin{lem}
\label{lem_cycle_elimination}
Let $\pi$ be a policy. Starting from $\pi$, after $O(n \log n)$ iterations in which a new cycle is created, some action in $\pi$ is either eliminated from cycles for the remainder of the algorithm or entirely eliminated from policies for the remainder of the algorithm.
\end{lem}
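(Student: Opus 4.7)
The plan is to follow the template of Lemma \ref{lem_path_elimination} but with Lemma \ref{lem_cycle_converge} as the source of geometric contraction. The high-level idea is to pick a single action $a \in \pi$ that carries a substantial fraction of the duality gap $-(\mbr^*)^T\mbx^\pi = \mbr^T\mbx^* - \mbr^T\mbx^\pi$, then argue that once the gap has shrunk enough this action can no longer appear on a cycle (or at all) in any future policy.

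Since every policy has exactly $n$ nonzero flux variables (Lemma \ref{lem_vertices}), an averaging argument gives an action $a \in \pi$ with $-\mbr^*_a \mbx^\pi_a \ge \tfrac{1}{n}\bigl(-(\mbr^*)^T\mbx^\pi\bigr)$. I then split on whether $a$ lies on a path or a cycle in $\pi$ and apply the upper bounds in Lemma \ref{lem_flux_size}. If $a$ is on a path, then $\mbx^\pi_a \le n$, so $-\mbr^*_a \ge -(\mbr^*)^T\mbx^\pi/n^2$; if $a$ is on a cycle, then $\mbx^\pi_a \le n/(1-\gamma)$, giving $-\mbr^*_a \ge (1-\gamma)\bigl(-(\mbr^*)^T\mbx^\pi\bigr)/n^2$.

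Next I convert these into lower bounds on the future gap whenever $a$ is still present in the appropriate sense, again using Lemma \ref{lem_flux_size}. In the path-case, any future policy $\pi''$ using $a$ satisfies $\mbx^{\pi''}_a \ge 1$, so $-(\mbr^*)^T\mbx^{\pi''} \ge -\mbr^*_a \ge -(\mbr^*)^T\mbx^\pi/n^2$. In the cycle-case, any future policy $\pi''$ with $a$ on a cycle satisfies $\mbx^{\pi''}_a \ge 1/(1-\gamma)$, so $-(\mbr^*)^T\mbx^{\pi''} \ge -\mbr^*_a/(1-\gamma) \ge -(\mbr^*)^T\mbx^\pi/n^2$. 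Thus shrinking the gap by a factor greater than $n^2$ forces the corresponding elimination.

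Finally I drive the contraction using Lemma \ref{lem_cycle_converge}, which gives a $(1-1/n)$ factor at each cycle-creating iteration, combined with Lemma \ref{lem_monotone} to ensure the gap is monotone non-increasing at the intervening iterations. After $k = \log_{n/(n-1)} n^2 = O(n \log n)$ cycle-creating iterations the gap has contracted by more than $n^2$, which in the path-case removes $a$ from every subsequent policy and in the cycle-case removes $a$ from every subsequent cycle. The only mildly delicate point, and what I expect to be the main bookkeeping obstacle, is that Lemma \ref{lem_cycle_converge} only guarantees contraction at cycle-creating steps, so the $O(n\log n)$ budget must count only those steps while monotonicity covers the rest; this matches the statement of the lemma exactly.
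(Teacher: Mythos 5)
Your proof is correct and follows the same route as the paper: the same averaging step picking $a \in \pi$ with $-\mbr^*_a\mbx^\pi_a \ge -(\mbr^*)^T\mbx^\pi/n$, the same path/cycle split using the flux bounds of Lemma \ref{lem_flux_size}, the same conversion to a lower bound on $-(\mbr^*)^T\mbx^{\pi''}$ whenever $a$ is on a path (resp.\ cycle) in $\pi''$, and the same geometric contraction via Lemma \ref{lem_cycle_converge} applied to the $O(n\log n)$ cycle-creating iterations (with monotonicity handling the rest). No gap.
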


\begin{proof}
Consider a policy $\pi$ with respect to the optimal gains $\mbr^*$. There is an action $a$ such that $-\mbr^*_a\mbx^\pi_a \ge -(\mbr^*)^T\mbx^\pi/n$. If $a$ is on a path in $\pi$, then $1 \le \mbx^\pi_a \le n$, so $-\mbr^*_a \ge -(\mbr^*)^T\mbx^\pi/n^2$, and if $a$ is on a cycle, then $1/(1-\gamma) \le \mbx^\pi_a \le n/(1-\gamma)$, so $-\mbr^*_a/(1-\gamma) \ge -(\mbr^*)^T\mbx^\pi/n^2$.

Since $\mbr^*$ are the gains for the optimal policy, $\mbr^*_{a'} \le 0$ for all $a'$. Therefore if $\pi'$ is any policy containing $a$, then 
$-\mbr^*_a \le -\mbr^*_a \mbx^{\pi'}_a \le -(\mbr^*)^T\mbx^{\pi'}$, 
and if $\pi'$ is any policy containing $a$ on a cycle, then
$-\mbr^*_a/(1-\gamma) \le -\mbr^*_a \mbx^{\pi'}_a \le -(\mbr^*)^T\mbx^{\pi'}$.
Now by Lemma \ref{lem_cycle_converge}, if there are more than $\log_{n/(n-1)} n^2 = O(n \log n)$ new cycles created between policies $\pi$ and $\pi'$ then
\[
-(\mbr^*)^T\mbx^{\pi'} < -\left(1-\frac{1}{n}\right)^{\log_{n/(n-1)} n^2}(\mbr^*)^T\mbx^{\pi} = -\frac{(\mbr^*)^T\mbx^\pi}{n^2} \; .
\]
Therefore if $\pi$ contained $a$ on a path, then $a$ cannot appear in any policy after $\pi'$ for the remainder of the algorithm, and if $\pi$ contained $a$ on a cycle, then $a$ cannot appear in a cycle after $\pi'$ (but may appear in a path) for the remainder of the algorithm.
\end{proof}

\begin{thm}
The simplex method converges in at most $O(n^3 m^2 \log^2 n)$ iterations on deterministic MDPs with uniform discount using the highest gain pivoting rule.
\end{thm}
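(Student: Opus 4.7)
The plan is to compose Lemmas \ref{lem_new_cycle} and \ref{lem_cycle_elimination} multiplicatively. The intuition is that Lemma \ref{lem_cycle_elimination} says structural progress happens whenever a new cycle is created---actions of the current policy get permanently frozen out of future policies (or at least out of future cycles)---while Lemma \ref{lem_new_cycle} says the algorithm cannot go long without either terminating or creating a new cycle. So I want to bound (new-cycle events) $\times$ (iterations per new-cycle event).

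First I would bound the total number of new-cycle creation events over the algorithm's lifetime. Lemma \ref{lem_cycle_elimination} says that among any $O(n\log n)$ consecutive new-cycle events, starting from some policy $\pi$, at least one action of $\pi$ is either banned from cycles for the remainder of the algorithm or banned from all policies for the remainder. Each of the $m$ actions can undergo at most two such freezing events (first removal from cycles, then later removal from all policies entirely), so there can be at most $2m$ freezing events in total. Applying Lemma \ref{lem_cycle_elimination} iteratively---restarting with a fresh reference policy $\pi$ after each block of $O(n\log n)$ new-cycle events---therefore shows that the algorithm sustains at most $O(mn\log n)$ new-cycle events before it must terminate.

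Second, Lemma \ref{lem_new_cycle} bounds the number of iterations in any stretch between two consecutive new-cycle events (or from the final such event to termination) by $O(n^2 m \log n)$. Multiplying the two bounds yields
\[
O(mn \log n) \cdot O(n^2 m \log n) \;=\; O(n^3 m^2 \log^2 n),
\]
which is the claimed iteration bound.

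I do not expect a genuine obstacle here; the nontrivial work is already in the two cited lemmas and the theorem is essentially their product. The only care needed is correct bookkeeping: verifying that each action truly contributes at most two freezing events (so the $O(m)$ factor in the outer count is correct), and confirming that Lemma \ref{lem_cycle_elimination} can legitimately be iterated by taking the policy immediately after each freezing event as the new reference $\pi$ in the next application.
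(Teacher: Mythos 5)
Your proposal is correct and matches the paper's own argument: the paper likewise considers blocks of $O(n\log n)$ new-cycle events, invokes Lemma \ref{lem_cycle_elimination} to extract an elimination per block, caps the number of blocks at $2m$ (each action can be eliminated once from cycles and once from paths), and multiplies by the $O(n^2 m\log n)$ per-cycle bound from Lemma \ref{lem_new_cycle}. The bookkeeping you flagged (two freezings per action; re-basing the reference policy after each block) is exactly how the paper closes the argument.
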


\begin{proof}
Consider the policies $\pi_0, \pi_1, \pi_2,\ldots$ where $O(n \log n)$ new cycles have been created between $\pi_i$ and $\pi_{i+1}$. 
By Lemma \ref{lem_cycle_elimination}, each $\pi_i$ contains an action that is either eliminated entirely in $\pi_j$ for $j>i$ or eliminated from cycles. Each action can be eliminated from cycles and paths, so after $2m$ such rounds of $O(n\log n)$ new cycles the algorithm has converged. By Lemma \ref{lem_new_cycle} cycles are created every $O(n^2 m \log n)$ iterations, for a total of $O(n^3 m^2 \log^2 n)$ iterations.
\end{proof}

\section{Varying Discounts}

\label{section_nonuniform}

In this section we allow each action $a$ to have a distinct discount $\gamma_a$. This significantly complicates the proof of convergence since the total flux is no longer fixed. When updating a cycle we can no longer bound the distance to the optimum based solely on the maximum gain, since the optimal policy may employ actions with smaller gain to the current policy but substantially more flux.

We are able to exhibit a set of layers in which the flux on cycles must lie based on the discount of the actions, and we  will show that when a cycle is created in a particular layer we make progress towards the optimum value for the updated state {\em assuming that it lies within that layer}. These layers will define a set of bounds whose values we must surpass, which serve as milestones or checkpoints to the optimum.
When we update a cycle we cannot claim that the overall objective increases substantially but only that the values of individual states make progress towards one of these milestone values. When the values of all states have surpassed each of these intermediate milestones the algorithm will terminate.

We first define some notation. Recall that to calculate flux we place one unit of ``mass'' in each state and then run the Markov chain, so all flux traces back to some state, but $\mbx^\pi$ aggregates all of it together. Because we will be concerned with analyzing the values of individual states in this section, it will be useful to separate out the flux originating in a particular state $s$. Consider the following alternate LP:

\begin{equation}
\label{eq_mod_primal}
\begin{array}{rlrl}
\textrm{maximize} &  \mbr^T\mbx  \\
\textrm{subject to} & & \sum_{a \in A_s} \mbx_a & = 1 + \sum_{a} \gamma_a P_{a,s}\mbx_a\\
& \forall s' \neq s & \sum_{a \in A_{s'}} \mbx_a  & =  \sum_{a} \gamma_a P_{a,s'} \mbx_a\\
& & \mbx & \ge 0 \\
\end{array}
\end{equation}

The LP \eqref{eq_mod_primal} is identical to \eqref{eq_primal}, except that initial flux is only added to state $s$ rather than all states, and the dual of \eqref{eq_mod_primal} matches \eqref{eq_dual} if the objective in \eqref{eq_dual} is changed to minimize only $\mbv_s$. Feasible solutions in \eqref{eq_mod_primal} measure only flux originating in $s$ and contributing to $\mbv_s$. For a state $s$ and policy $\pi$ we use the notation $\mbx^{\pi,s}$ to denote the corresponding vertex in \eqref{eq_mod_primal}. Note that $\mbx^\pi = \sum_s \mbx^{\pi,s}$.

The following lemma is analogous to Lemma \ref{lem_reducedcosts} and has an identical proof:

\begin{lem}
\label{lem_subset_reducedcosts}
For a state $s$ and for policies $\pi$ and $\pi'$, $(\mb{r}^\pi)^T\mb{x}^{\pi',s} = \mb{r}^T\mb{x}^{\pi',S} - \mb{r}^T\mb{x}^{\pi,S} = \mb{v}^{\pi'}_s - \mb{v}^\pi_s$.
\end{lem}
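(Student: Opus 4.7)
The plan is to follow the proof of Lemma \ref{lem_reducedcosts} almost line by line, replacing the all-ones right-hand side $\mb{1}$ with the unit vector $\mb{e}_s$ on state $s$, since this is the only way in which the constraints of \eqref{eq_mod_primal} differ from those of \eqref{eq_primal}. The discounted-transition matrix in this section is $M - \gamma P$ with entries $\gamma_a P_{a,s}$ in the $P$ block, but this change is purely notational and does not affect any of the algebra.

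First I would observe that because $\mbx^{\pi',s}$ is a feasible (indeed basic) solution of \eqref{eq_mod_primal}, we have $(M - \gamma P)\mbx^{\pi',s} = \mb{e}_s$. Substituting the definition $\mbr^\pi = \mbr - (M - \gamma P)^T\mbv^\pi$ from \eqref{eq_reducedcost} then gives
\[
(\mbr^\pi)^T\mbx^{\pi',s} \;=\; \mbr^T\mbx^{\pi',s} - (\mbv^\pi)^T(M - \gamma P)\mbx^{\pi',s} \;=\; \mbr^T\mbx^{\pi',s} - \mbv^\pi_s.
\]
This handles the right-hand equality (after subtracting the analogous identity with $\pi$ in place of $\pi'$, or directly by the step below).

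Next I would establish the analog of the second half of Lemma \ref{lem_reducedcosts}, namely $\mbr^T\mbx^{\pi,s} = \mbv^\pi_s$. From Definition \ref{def_flux} adapted to initial mass $\mb{e}_s$ rather than $\mb{1}$, the nonzero entries of $\mbx^{\pi,s}$ are the entries of $(I - \gamma P^\pi)^{-1}\mb{e}_s$, so
\[
\mbr^T\mbx^{\pi,s} \;=\; \mbr_\pi^T (I - \gamma P^\pi)^{-1}\mb{e}_s \;=\; \mb{e}_s^T (I - \gamma P^\pi)^{-T}\mbr_\pi \;=\; \mb{e}_s^T \mbv^\pi \;=\; \mbv^\pi_s,
\]
using the formula for $\mbv^\pi$ from Definition \ref{def_value}. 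Applying this with both $\pi$ and $\pi'$ and combining with the previous display yields $(\mbr^\pi)^T\mbx^{\pi',s} = \mbr^T\mbx^{\pi',s} - \mbr^T\mbx^{\pi,s} = \mbv^{\pi'}_s - \mbv^\pi_s$, which is the claimed chain of equalities (with the $S$'s in the statement read as the lowercase $s$).

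There is no real obstacle here: the argument is verbatim the proof of Lemma \ref{lem_reducedcosts} with $\mb{1}$ replaced by $\mb{e}_s$, and the only thing to check carefully is that \eqref{eq_mod_primal} shares the same constraint matrix as \eqref{eq_primal} so that the expansion of $\mbr^\pi$ yields the dual term $\mbv^\pi_s$ rather than $\mb{1}^T\mbv^\pi$.
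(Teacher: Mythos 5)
Your proof is correct and matches the paper's intent exactly: the paper simply states that Lemma \ref{lem_subset_reducedcosts} ``has an identical proof'' to Lemma \ref{lem_reducedcosts}, and your argument is precisely that proof with $\mathbf{1}$ replaced by $\mathbf{e}_s$. No discrepancy.
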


We now define the intervals in which the flux must lie. As in Section \ref{section_uniform} flux on paths is in $[1,n]$. Let $C$ be a cycle in some policy, and $\gamma_C = \prod_{a\in C} \gamma_a$ be total discount of $C$. We will prove that the smallest discount in $C$ determines the rough order of magnitude of the flux through $C$.

\begin{defn} 
Let $C$ be a cycle and $a$ an action in $C$, then the discount of $a$ {\em dominates} the discount of $C$ if $\gamma_a \le \gamma_{a'}$ for all $a' \in C$.
\end{defn}

\begin{lem} 
\label{lem_cycle_flow_nonuniform}
Let $\pi$ be a policy containing the cycle $C$ with discount dominated by $\gamma_a$ and total discount $\gamma_C$.  Let $s$ be a state on $C$, $a'$ the action used in $s$ and $a''$ an arbitrary action in $C$, then 
\begin{itemize}
\item $\mb{x}^{\pi,s}_{a'} = 1/(1-\gamma_C)$,
\item $\gamma_C/(1-\gamma_C) \le \mb{x}^{\pi,s}_{a''} \le 1/(1-\gamma_C)$, and
\item $1/(n(1-\gamma_a)) \le 1/(1-\gamma_C) \le 1/(1-\gamma_a)$.
\end{itemize}
\end{lem}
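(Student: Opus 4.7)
The plan is to exploit the fact that the modified primal \eqref{eq_mod_primal} injects a single unit of mass at $s$, and that since $s$ lies on the deterministic cycle $C$, that mass simply walks around $C$ forever, being multiplied by the discount of each action it crosses. This yields closed-form geometric expressions for every flux on $C$, and all three bullets then follow by elementary algebra; I do not anticipate any real obstacle beyond careful bookkeeping.

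Concretely, I would enumerate the cycle as $s = s_0 \to s_1 \to \cdots \to s_{k-1} \to s_0$, with $a_j$ denoting the action used at $s_{j-1}$, so $a_1 = a'$ and $\gamma_C = \prod_{j=1}^k \gamma_{a_j}$. Tracking the unit of mass placed at $s_0$: after traversing $a_1, \ldots, a_{j-1}$ it arrives at $s_{j-1}$ scaled by $\gamma_{\mathrm{pre}} := \gamma_{a_1}\cdots\gamma_{a_{j-1}}$ (empty product equal to $1$), and every subsequent full lap contributes another factor of $\gamma_C$. Summing the resulting geometric series gives
\[
\mbx^{\pi,s}_{a_j} \;=\; \frac{\gamma_{\mathrm{pre}}}{1-\gamma_C}.
\]
Setting $j=1$ (so $\gamma_{\mathrm{pre}}=1$) yields the first bullet. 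For the second bullet, this formula reduces the desired inequalities to $\gamma_C \le \gamma_{\mathrm{pre}} \le 1$; the upper bound is immediate since each $\gamma_{a_i}\in[0,1]$, and the lower bound follows because $\gamma_{\mathrm{pre}} = \gamma_C/(\gamma_{a_j}\gamma_{a_{j+1}}\cdots\gamma_{a_k})$ with every denominator factor in $[0,1]$.

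For the third bullet, the upper bound $1/(1-\gamma_C) \le 1/(1-\gamma_a)$ is immediate from $\gamma_C = \gamma_a\prod_{a'\in C\setminus\{a\}}\gamma_{a'} \le \gamma_a$. The lower bound is the only place where the dominance hypothesis and the bound $|C|\le n$ actually enter: by dominance $\gamma_{a'}\ge \gamma_a$ for every $a'\in C$, so $\gamma_C \ge \gamma_a^{|C|}$, and therefore
\[
1-\gamma_C \;\le\; 1-\gamma_a^{|C|} \;=\; (1-\gamma_a)\bigl(1+\gamma_a+\cdots+\gamma_a^{|C|-1}\bigr) \;\le\; |C|(1-\gamma_a) \;\le\; n(1-\gamma_a).
\]
The only mildly subtle points are getting the direction of $\gamma_{\mathrm{pre}}\ge \gamma_C$ correct in the second bullet and noticing the telescoping of $1-\gamma_a^{|C|}$ in the third; everything else is direct from the deterministic walk around $C$.
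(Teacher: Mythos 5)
Your proof is correct and takes essentially the same approach as the paper: both track the unit of mass from $s$ around the deterministic cycle to get a geometric series, both observe that the cumulative pre-discount lies in $[\gamma_C,1]$ for the middle bullet, and both derive the last bullet from $\gamma_a^{|C|}\le\gamma_C\le\gamma_a$ (yours via the telescoping factorization of $1-\gamma_a^{|C|}$, the paper's via Bernoulli's inequality $(1-\ell)^n\ge 1-n\ell$, which are equivalent one-liners). The only cosmetic difference is that you write out the explicit closed form $\gamma_{\mathrm{pre}}/(1-\gamma_C)$, which the paper leaves implicit.
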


\begin{proof}
For the first equality, all flux originates at $s$, so the flux through $a'$ (used in state $s$) either just originated in $s$ or came around the cycle from $s$, implying $\mb{x}^{\pi,s}_{a'} = 1 + \gamma_C\mb{x}^{\pi,s}_{a'}$.
An analogous equation holds for all other actions $a''$ on $C$, but now the initial flow from $s$ may have been discounted by at most $\gamma_C$ before reaching $a''$, giving $\gamma_C/(1-\gamma_C) \le \mb{x}^{\pi,s}_{a''} \le 1/(1-\gamma_C)$.

The upper bound in the final inequality, $1/(1-\gamma_C) \le 1/(1-\gamma_a)$ holds since $a \in C$ ($\gamma_a$ dominates the discount of $C$). For the lower bound, let $\ell = 1-\gamma_a$. Then $\gamma_C \ge \gamma_a^n = (1-\ell)^n \ge 1-n\ell = 1-n(1-\gamma_a)$, implying $1/(1-\gamma_C) \ge 1/(n(1-\gamma_a))$.
\end{proof}

Flux on paths still falls in $[1,n]$, so the algorithm behaves the same on paths as it did in the uniform case:

\begin{lem}
\label{lem_new_cycle_nonuniform}
Either the algorithm finishes or a new cycle is created after $O(n^2 m \log n)$ iterations.
\end{lem}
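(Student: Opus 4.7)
The plan is to mirror the uniform-case proof of Lemma \ref{lem_new_cycle}, establishing nonuniform analogs of Lemma \ref{lem_path_converge} (path-gap contraction) and Lemma \ref{lem_path_elimination} (action/cycle elimination), and then assembling them in exactly the same way.

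The contraction lemma transfers essentially verbatim. Let $\pi$ be the current policy, $\pi'$ the last policy before a new cycle is created, and $\pi''$ any policy in between. The cycles in $\pi''$ are a subset of those in $\pi$ (only cycle-breaking can happen without forming a new cycle), so every action $a$ on a cycle in $\pi''$ also lies in $\pi$ and hence has $\mbr^\pi_a = 0$. Since path flux is still bounded by $n$ per action and $n^2$ in total, we obtain $(\mbr^\pi)^T \mbx^{\pi''} \le n^2 \Delta$ for $\Delta = \max_a \mbr^\pi_a$, and a single pivot still earns at least $\Delta$, yielding contraction factor $1 - 1/n^2$ per iteration without a new cycle.

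For the elimination lemma I would split the gap $-(\mbr^{\pi'})^T \mbx^\pi$ into its path and cycle contributions. The path case carries over directly: if some path action $a$ accounts for at least a $1/n$ fraction of the gap, then $\mbx^\pi_a \le n$ forces $-\mbr^{\pi'}_a \ge -(\mbr^{\pi'})^T \mbx^\pi / n^2$, and the Bellman-type inequality with the action-specific discount $\gamma_a$, together with the monotonicity $\mbv^{\pi'} \ge \mbv^{\pi''}$, shows $-(\mbr^{\pi'})^T \mbx^{\pi''} \ge -\mbr^{\pi'}_a \ge -(\mbr^{\pi'})^T \mbx^\pi / n^2$ whenever $a \in \pi''$. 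For the cycle case the key structural fact is that values on a cycle are determined by the cycle alone: for $s \in C \subset \pi$, unrolling $\mbv^\pi_s = \mbr_{\pi(s)} + \gamma_{\pi(s)} \mbv^\pi_{\text{next}}$ around $C$ expresses $\mbv^\pi_s$ purely in terms of the rewards and discounts of actions in $C$ and $\gamma_C$, so if $C \subset \pi''$ as well, then $\mbv^{\pi''}_s = \mbv^\pi_s$ for every $s \in C$. Using Lemma \ref{lem_subset_reducedcosts} to decompose flux by source state and Lemma \ref{lem_cycle_flow_nonuniform} to bound the ``entering flux'' $y_s \in [1,n]$ at each $s \in C$, the cycle contribution equals $\sum_{a \in C} -\mbr^{\pi'}_a \mbx^\pi_a = \sum_{s \in C} y_s(\mbv^{\pi'}_s - \mbv^\pi_s)$, so $\sum_{s \in C}(\mbv^{\pi'}_s - \mbv^\pi_s) \ge -(\mbr^{\pi'})^T \mbx^\pi / n^2$, and cycle-value invariance then yields $-(\mbr^{\pi'})^T \mbx^{\pi''} \ge \sum_{s \in C}(\mbv^{\pi'}_s - \mbv^{\pi''}_s) = \sum_{s \in C}(\mbv^{\pi'}_s - \mbv^\pi_s) \ge -(\mbr^{\pi'})^T \mbx^\pi / n^2$. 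Combined with the contraction lemma, after $O(n^2 \log n)$ iterations the gap drops below this threshold, so $a$ must be absent from all subsequent policies or $C$ must be broken.

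The assembly is then identical to Lemma \ref{lem_new_cycle}: take checkpoints $\pi_0, \pi_1, \ldots$ spaced $O(n^2 \log n)$ iterations apart, each of which costs either a broken cycle from the starting policy (at most $n$ such) or a permanently eliminated action until the next new cycle (at most $m$), so a new cycle must appear within $(n+m) \cdot O(n^2 \log n) = O(n^2 m \log n)$ iterations. The main obstacle is the cycle case of the elimination step, specifically verifying the flux--value identity with heterogeneous discounts and confirming that state values on a cycle depend only on $\gamma_C$ and the cycle's own actions; once these structural facts are secured, the remainder of the argument closely tracks the uniform template.
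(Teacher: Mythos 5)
Your proof is correct and follows essentially the same route as the paper: carry over Lemma~\ref{lem_path_converge} unchanged, adapt the cycle case of Lemma~\ref{lem_path_elimination}, and reassemble exactly as in Lemma~\ref{lem_new_cycle}. In that cycle case your argument is actually a touch cleaner than the paper's one-line adjustment of Equation~\eqref{eqn_cycle_breaking}: by stating explicitly that $\mbv^\pi_s$ for a cycle state depends only on the cycle $C$ itself (so $\mbv^{\pi''}_s = \mbv^\pi_s$ whenever $C\subset\pi\cap\pi''$), you pass directly from $\sum_{s\in C}(\mbv^{\pi'}_s-\mbv^\pi_s)\ge -(\mbr^{\pi'})^T\mbx^\pi/n^2$---obtained from the flux--value identity and the entering-flux bound $y_s\le n$---to the same lower bound on $-(\mbr^{\pi'})^T\mbx^{\pi''}$, bypassing the intermediate per-action term $\sum_{a\in C}\mbr_a/(1-\gamma_C)$ in the paper's display, which is somewhat delicate to justify once the per-action cycle flux weights $\sum_{s\in C}\mbx^{\pi,s}_a$ are no longer all equal to $1/(1-\gamma_C)$.
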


\begin{proof}
This is identical to the proof of Lemma \ref{lem_new_cycle}, which depends on Lemmas \ref{lem_path_converge} and \ref{lem_path_elimination}. Lemma \ref{lem_path_converge} holds for nonuniform discounts, and Lemma \ref{lem_path_elimination} holds after adjusting Equation \eqref{eqn_cycle_breaking} as follows
\[
-(\mb{r}^{\pi'})^T\mb{x}^{\pi''} \ge 
\sum_{s \in C} \mb{v}^{\pi'}_s - \mb{v}^{\pi''}_s \ge
 -\frac{\sum_{a \in C} \mb{r}^{\pi''}_a}{1-\gamma_C} \ge
 -\frac{-(\mb{r}^{\pi'})^T\mb{x}^{\pi}}{n^2} \; ,
\]
using that $\sum_{a \in C} \mb{r}^{\pi''}_a n/(1-\gamma_C) \ge -(\mb{r}^{\pi'})^T\mb{x}^{\pi}/n$ and Lemma \ref{lem_cycle_flow_nonuniform}.
\end{proof}

Now suppose the simplex method updates the action for state $s$ in policy $\pi$ and creates a cycle dominated by $\gamma_a$. Again, $\mbv_s$ may not improve much, since there may be a cycle with discount much larger than $\gamma_a$. However, in any policy $\pi'$ where $s$ is on a cycle dominated by $\gamma_{a}$ and $s$ uses some action $a'$, $1/(n(1-\gamma_a)) \le \mbx^{\pi',s}_{a'} \le 1/(1-\gamma_a)$, which allows us to argue $\mbv_s$ has made progress towards the highest value achievable when it is on a cycle dominated by $\gamma_a$, and after enough such progress has made, $\mbv_s$ will beat this value and never again appear on any cycle dominated by $\gamma_a$. The optimal values achievable for each state on a cycle dominated by each $\gamma_a$ serve as the above-mentioned milestones. Since all cycles are dominated by some $\gamma_a$, there are $m$ milestones per state. 

\begin{lem}
\label{lem_cycle_converge_nonuniform}
Suppose the simplex method moves from $\pi$ to $\pi'$ by updating the action for state $s$, creating a new cycle $C$ with discount dominated by $\gamma_a$ for some $a$ in $\pi'$. 
Let $\pi''$ be the final policy used by the simplex method in which $s$ is in a cycle dominated by $\gamma_a$.
Then $\mb{v}_s^{\pi''} - \mb{v}_s^{\pi'} \le (1-1/n^2)(\mb{v}_s^{\pi''} - \mb{v}_s^{\pi})$.
\end{lem}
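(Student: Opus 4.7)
The plan is to mirror the uniform-case argument of Lemma \ref{lem_cycle_converge}, but applied to the per-state value $\mbv_s$ via the single-source LP \eqref{eq_mod_primal} and its gain identity Lemma \ref{lem_subset_reducedcosts}. Two structural facts drive the proof: (i) the pivoted action $a^*$ used by $s$ in $\pi'$ attains the globally highest gain $\Delta = \max_{a'} \mbr^\pi_{a'}$ by Dantzig's rule; and (ii) because the MDP is deterministic, the source-$s$ flux vectors $\mbx^{\pi',s}$ and $\mbx^{\pi'',s}$ are supported entirely on whichever cycle contains $s$ in the respective policies, since flux originating at $s$ cannot leave the cycle it enters.

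First I would apply Lemma \ref{lem_subset_reducedcosts} to rewrite
\[
\mbv_s^{\pi''} - \mbv_s^\pi = (\mbr^\pi)^T \mbx^{\pi'',s}, \qquad \mbv_s^{\pi'} - \mbv_s^\pi = (\mbr^\pi)^T \mbx^{\pi',s}.
\]
For the upper bound on the first quantity, since $s$ sits on a cycle $C''$ in $\pi''$ dominated by $\gamma_a$, Lemma \ref{lem_cycle_flow_nonuniform} gives $\mbx^{\pi'',s}_{a'} \le 1/(1-\gamma_{C''}) \le 1/(1-\gamma_a)$ for every $a' \in C''$. Summing over the at most $n$ actions of $C''$ (the entire support of $\mbx^{\pi'',s}$) and using $\mbr^\pi_{a'} \le \Delta$ yields
\[
\mbv_s^{\pi''} - \mbv_s^\pi \le \Delta n/(1-\gamma_a).
\]

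For the lower bound, note that $\pi$ and $\pi'$ differ only in the action at $s$, so every action of $C$ other than $a^*$ already belongs to $\pi$ and has zero gain with respect to $\pi$ by Lemma \ref{lem_reducedcosts}. Since $\mbx^{\pi',s}$ is supported on $C$, the inner product collapses to a single term $\mbv_s^{\pi'} - \mbv_s^\pi = \Delta \cdot \mbx^{\pi',s}_{a^*}$. Another application of Lemma \ref{lem_cycle_flow_nonuniform} gives $\mbx^{\pi',s}_{a^*} = 1/(1-\gamma_C) \ge 1/(n(1-\gamma_a))$, so
\[
\mbv_s^{\pi'} - \mbv_s^\pi \ge \Delta/(n(1-\gamma_a)).
\]
Dividing the two estimates yields $\mbv_s^{\pi'} - \mbv_s^\pi \ge (1/n^2)(\mbv_s^{\pi''} - \mbv_s^\pi)$, and subtracting both sides from $\mbv_s^{\pi''} - \mbv_s^\pi$ gives the claimed contraction.

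The main thing to be careful about is the bookkeeping around the phrase ``dominated by $\gamma_a$'' for $C''$: it asserts that some action of $C''$ has discount exactly $\gamma_a$, which forces $\gamma_{C''} \le \gamma_a$ and hence $1/(1-\gamma_{C''}) \le 1/(1-\gamma_a)$. Apart from this and the observation that deterministic transitions confine the source-$s$ flux to a single cycle, the argument is a direct translation of the uniform-case proof into the per-state language of \eqref{eq_mod_primal}.
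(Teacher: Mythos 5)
Your proof is correct and follows essentially the same route as the paper's: both bound $\mbv_s^{\pi''} - \mbv_s^\pi$ above by $n\Delta/(1-\gamma_a)$ using the flux bounds of Lemma \ref{lem_cycle_flow_nonuniform} together with Lemma \ref{lem_subset_reducedcosts}, bound $\mbv_s^{\pi'} - \mbv_s^\pi$ below by $\Delta/(n(1-\gamma_a))$ via the flux through the pivoted action, and combine. Your version merely makes a couple of steps more explicit (that $\mbx^{\pi'',s}$ is supported on the cycle through $s$, and that the lower-bound inner product collapses to the single pivoted term because the other cycle actions already lie in $\pi$ and have gain zero), which is fine but not a different argument.
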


\begin{proof}
Let $\Delta = \max_{a'} \mb{r}^{\pi}_{a'}$ be the value of the highest gain with respect to $\pi$. Any cycle contains at most $n$ actions, each of which has gain at most $\Delta$ in $\mbr^\pi$, so if $s$ is on a cycle dominated by $\gamma_a$ in $\pi''$ then by Lemma \ref{lem_cycle_flow_nonuniform} and Lemma \ref{lem_subset_reducedcosts},
$\mb{v}_s^{\pi''} - \mb{v}_s^\pi \le n\Delta/(1-\gamma_a)$, and since $\pi'$ creates a cycle dominated by $\gamma_a$, by the same lemmas $\mb{v}_s^{\pi'} \ge \mb{v}_s^\pi +\Delta/(n(1-\gamma_a))$. Combining the two,
\[
\mb{v}_s^{\pi''} - \mb{v}_s^{\pi'} = 
(\mb{v}_s^{\pi''} - \mb{v}_s^{\pi}) - (\mb{v}_s^{\pi'} - \mb{v}_s^{\pi}) \le 
(\mb{v}_s^{\pi''} - \mb{v}_s^{\pi}) - \frac{\Delta}{n(1-\gamma_a)} \le
\left(1-\frac{1}{n^2}\right)(\mb{v}_s^{\pi''} - \mb{v}_s^{\pi}) \; . \qedhere
\]
\end{proof}

The following lemma is the crux of our analysis and allows us to eliminate actions when we get close to a milestone value. This occurs because the positive gains must shrink or else the algorithm would surpass the milestone, and as the positive gains shrink they can no longer balance larger negative gains, forcing such actions out of the cycle.

\begin{lem}
\label{lem_cycle_elimination_nonuniform}
Suppose policy $\pi$ contains a cycle $C$ with discount dominated by $\gamma_a$ and $s$ is a state in $C$. There is some action $a'$ in $C$ (depending on $s$) such that after $O(n^2 \log n)$ iterations that change the action for $s$ and create a cycle with discount dominated by $\gamma_a$, action $a'$ will never again appear in a cycle dominated by $\gamma_a$.
\end{lem}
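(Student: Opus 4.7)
The plan is to mirror the uniform-discount argument in Lemma \ref{lem_cycle_elimination}, with Lemma \ref{lem_cycle_converge_nonuniform} replacing Lemma \ref{lem_cycle_converge} as the geometric contraction. The three main steps are (i) identify an action $a' \in C$ whose contribution to the initial gap $\mbv_s^{\pi''} - \mbv_s^\pi$ is at least a $1/n$ fraction of the total, (ii) iterate the contraction to shrink that gap by an arbitrary polynomial factor after $O(n^2 \log n)$ special iterations, and (iii) derive a contradiction if $a'$ ever reappears on a cycle dominated by $\gamma_a$.

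For step (i), since $s \in C$ in $\pi$ the flux $\mbx^{\pi,s}$ is supported on $C$, and Lemma \ref{lem_subset_reducedcosts} gives
\[
\mbv_s^{\pi''} - \mbv_s^\pi \;=\; -(\mbr^{\pi''})^T\mbx^{\pi,s} \;=\; \sum_{a'' \in C} -\mbr^{\pi''}_{a''}\, \mbx^{\pi,s}_{a''}.
\]
The total is nonnegative by Lemma \ref{lem_monotone}, so among the at most $n$ summands at least one, for an action $a' \in C$, satisfies $-\mbr^{\pi''}_{a'} \mbx^{\pi,s}_{a'} \ge (\mbv_s^{\pi''}-\mbv_s^\pi)/n$. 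Combining with the flux upper bound $\mbx^{\pi,s}_{a'} \le 1/(1-\gamma_a)$ from Lemma \ref{lem_cycle_flow_nonuniform} yields $-\mbr^{\pi''}_{a'} \ge (1-\gamma_a)(\mbv_s^{\pi''}-\mbv_s^\pi)/n$. For step (ii), Lemma \ref{lem_cycle_converge_nonuniform} says each special iteration shrinks the gap between $\mbv_s^{\pi''}$ and the current $\mbv_s$ by a factor $(1-1/n^2)$, so after $k = \Theta(n^2 \log n)$ such iterations the resulting policy $\pi_k$ satisfies $\mbv_s^{\pi''}-\mbv_s^{\pi_k} < (\mbv_s^{\pi''}-\mbv_s^\pi)/n^c$ for any fixed constant $c$ of our choice.

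For step (iii), suppose $a'$ reappears on a cycle $C'$ dominated by $\gamma_a$ in some policy $\pi'''$ occurring after $\pi_k$. Because every action has a unique originating state $s_{a'}$, that state lies on $C'$, and Lemma \ref{lem_cycle_flow_nonuniform} gives $\mbx^{\pi''',s_{a'}}_{a'} = 1/(1-\gamma_{C'}) \ge 1/(n(1-\gamma_a))$. The aim is to combine this flux lower bound with the gain lower bound on $-\mbr^{\pi''}_{a'}$ from step (i) to force a value gap at $s_{a'}$ that exceeds the shrinkage permitted by step (ii). The main obstacle---and the place where this argument diverges most from the uniform case---is that the entries of $\mbr^{\pi''}$ can have either sign, so in the expansion
\[
\mbv_{s_{a'}}^{\pi'''} - \mbv_{s_{a'}}^{\pi''} \;=\; \sum_{a'' \in C'} \mbr^{\pi''}_{a''}\, \mbx^{\pi''', s_{a'}}_{a''}
\]
the right side cannot simply be lower bounded by the single $a'$ summand. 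The planned resolution is to rerun the pigeonhole/contraction machinery with $s_{a'}$ in place of $s$ (legitimate because $a' \in C$ forces $s_{a'} \in C$, so $s_{a'}$ also sits on a cycle dominated by $\gamma_a$ in $\pi$), transfer the polynomial shrinkage to $s_{a'}$'s own gap relative to its milestone, and then observe that the large guaranteed flux through $a'$ together with the gain lower bound from step (i) violates that shrinkage.
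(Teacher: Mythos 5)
Your steps (i) and (ii) are fine and roughly mirror the paper, but in step (iii) you have correctly identified the central difficulty of this lemma and then not resolved it. The plan to rerun the pigeonhole/contraction machinery with $s_{a'}$ in place of $s$ does not help: the same sign problem recurs. When $a'$ reappears on a cycle $C'$ with state $s_{a'}$, the expansion $\mbv_{s_{a'}}^{\pi'''} - \mbv_{s_{a'}}^{\pi''} = \sum_{a'' \in C'}\mbr^{\pi''}_{a''}\mbx^{\pi''',s_{a'}}_{a''}$ still mixes the one very negative term $\mbr^{\pi''}_{a'}\mbx^{\pi''',s_{a'}}_{a'}$ with up to $n-1$ other terms of uncontrolled sign and magnitude, and nothing in your setup bounds the positive ones. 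Shrinking $s_{a'}$'s own milestone gap does not tell you anything about $\max_{a''}\mbr^{\pi''}_{a''}$, which is what you would need.

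The paper avoids this by changing the reference policy. Instead of measuring gains against $\pi''$, it measures them against $\hat{\pi}$, the policy in the iteration \emph{immediately preceding} $\pi'$. This buys two coupled bounds simultaneously. First, since $\mbv_s^\pi - \mbv_s^{\hat\pi}$ is forced to be hugely negative by the contraction (Equation \eqref{eqn_v_gap}), the most-negative gain $\mbr^{\hat\pi}_{a'} = \min_{a\in C}\mbr^{\hat\pi}_a$ is correspondingly very negative (Equation \eqref{eqn_min_gain}). Second, and this is the step you are missing, because the very next pivot from $\hat\pi$ is the one that creates the cycle for $s$ and only advances $\mbv_s$ a small amount, the \emph{largest} gain $\Delta = \max\mbr^{\hat\pi}$ satisfies $\Delta/(n(1-\gamma_a)) \le \mbv_s^{\pi''} - \mbv_s^{\hat\pi}$ (Equation \eqref{eqn_max_gain}). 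With both bounds in hand: if $a'$ ever lies on a cycle dominated by $\gamma_a$ in a later policy $\pi'''$, then $\mbv_{s'}^{\pi'''} - \mbv_{s'}^{\hat\pi} \le \mbr^{\hat\pi}_{a'}/(n(1-\gamma_a)) + n\Delta/(1-\gamma_a) < 0$, contradicting monotonicity of values. Without using a reference policy sitting right before a pivot so that $\Delta$ is controlled, there is no way to cap the positive contributions, and the argument does not close.
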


\begin{proof}
Let $\pi$ be a policy containing a cycle $C$ with discount dominated by $\gamma_a$ and $s$ a state in $C$.  Let $\pi'$ be another policy where $s$ is on a cycle dominated by $\gamma_a$ after at least $1+\log_{n^2/(n^2-1)} n^5 = O(n^2 \log n)$ iterations that create such a cycle by changing the action for $s$ and $\pi''$ the final policy used by the algorithm in which $s$ is on a cycle dominated by $\gamma_a$.

Consider the policy $\hat{\pi}$ in the iteration immediately preceding $\pi'$.
By Lemma \ref{lem_cycle_converge_nonuniform}, and the choice of $\pi'$,
\[
\mb{v}^{\pi''}_s - \mb{v}^{\hat{\pi}}_s \le \left(1-\frac{1}{n^2}\right)^{\log_{n^2/(n^2-1)} n^5}(\mb{v}^{\pi''}_s - \mb{v}^{\pi}_s) = \frac{1}{n^5}(\mb{v}^{\pi''}_s - \mb{v}^{\pi}_s) \; ,
\]
or equivalently $\mb{v}^{\pi}_s - \mb{v}^{\pi''}_s \le -n^5(\mb{v}^{\pi''}_s - \mb{v}^{\hat{\pi}}_s )$, implying
\begin{equation}
\label{eqn_v_gap}
\mb{v}^{\pi}_s - \mb{v}^{\hat{\pi}}_s = (\mb{v}^{\pi}_s - \mb{v}^{\pi''}_s) + (\mb{v}^{\pi''}_s - \mb{v}^{\hat{\pi}}_s) \le (-n^5 +1)(\mb{v}^{\pi''}_s - \mb{v}^{\hat{\pi}}_s) \; .
\end{equation}
Since the gap $\mb{v}^{\pi}_s - \mb{v}^{\hat{\pi}}_s$ is large and negative, there must be highly negative gains in $\mbr^{\hat{\pi}}$.
By Lemma \ref{lem_subset_reducedcosts} $\mb{v}_s^{\pi}-\mb{v}_s^{\hat{\pi}} = (\mb{r}^{\hat{\pi}})^T\mb{x}^{\pi,s}$.
Let $\mb{r}^{\hat{\pi}}_{a'} = \min_{a \in C} \mb{r}^{\hat{\pi}}_a$ and $s'$ be the state using $a'$. By Lemma \ref{lem_cycle_flow_nonuniform}, $\mb{x}^{\pi,s} \le 1/(1-\gamma_a)$, and $C$ has at most $n$ states, so applying Equation \eqref{eqn_v_gap}
\begin{equation}
\label{eqn_min_gain}
\frac{\mb{r}^{\hat{\pi}}_{a'}}{1-\gamma_a} \le \frac{1}{n}(\mb{v}_s^\pi-\mb{v}_s^{\hat{\pi}}) \le \left(-n^4 + \frac{1}{n}\right)(\mb{v}^{\pi''}_s - \mb{v}^{\hat{\pi}}_s) \; .
\end{equation}

The positive entries in $\mbr^{\hat{\pi}}$ must all be small, since there is only a small increase in the value of $s$.
Let $\Delta = \max \mb{r}^{\hat{\pi}}$. The algorithm pivots on the highest gain, and by assumption it updates the action for $s$ and creates a cycle dominated by $\gamma_a$. By Lemma \ref{lem_cycle_flow_nonuniform}, the new action is used at least $1/(n(1-\gamma_a))$ times by flux from $s$, since it is the first action in the cycle, so
\begin{equation}
\label{eqn_max_gain}
\frac{\Delta}{n(1-\gamma_a)} \le \mb{v}_s^{\pi'} - \mb{v}_s^{\hat{\pi}} \le  \mb{v}_s^{\pi''} - \mb{v}_s^{\hat{\pi}} \; .
\end{equation}

We prove that the highly negative $\mbr^{\hat{\pi}}_{a'}$ cannot coexist with only small positive gains bounded by $\Delta$.
Consider any policy in which $s'$ is on a cycle $C'$ containing $a'$ (but not necessarily containing $s$) with total gain $\gamma_{C'}$ dominated by $\gamma_a$. By Lemma \ref{lem_cycle_flow_nonuniform}, there is at least $1/(1-\gamma_{C'}) \ge 1/(n(1-\gamma_a))$ flux from $s$ going through $a'$, and in the rest of the cycle there are at most $n-1$ other actions with at most $1/(1-\gamma_{C'}) \le 1/(1-\gamma_a)$ flux. The highest gain with respect to $\hat{\pi}$ is $\Delta$, so the value of $\mb{v}_{s'}$ relative to $\mb{r}^{\hat{\pi}}$ is at most
\begin{align*}
\frac{\mb{r}^{\hat{\pi}}_{a'}}{n(1-\gamma_a)} + \frac{n\Delta}{1-\gamma_a} 
& \le \left(-n^3 + \frac{1}{n^2}\right)(\mb{v}_s^{\pi''} -\mb{v}_s^{\hat{\pi}}) + n^2(\mb{v}_s^{\pi''} - \mb{v}_s^{\hat{\pi}})  \\
& = \left(-n^3 + \frac{1}{n^2} + n^2\right)(\mb{v}_s^{\pi''} - \mb{v}_s^{\hat{\pi}}) < 0
\end{align*}
using Equations \eqref{eqn_min_gain} and \eqref{eqn_max_gain}.
But $\mb{v}^{\hat{\pi}}_{s'} = 0$ relative to $\mb{r}^{\hat{\pi}}$, and it only increases in future iterations, so $a'$ cannot appear again in a cycle dominated by $\gamma_a$.
\end{proof}

\begin{lem}
\label{lem_discount_elimination}
For any action $a$, there are at most $O(n^3m \log n)$ iterations that create a cycle with discount dominated by $\gamma_a$.
\end{lem}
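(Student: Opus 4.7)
The idea is to partition the iterations in question by which state is repivoted and bound each part via Lemma \ref{lem_cycle_elimination_nonuniform}. A single simplex step alters exactly one outgoing edge, so any newly-formed cycle must traverse that new edge and hence contain its tail vertex. Consequently every iteration that creates a cycle dominated by $\gamma_a$ can be tagged with the pivoted state $s$, which necessarily lies on the new cycle; summing the counts per state reproduces the total number of such iterations.

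Fix a state $s$ and let $I_s$ denote the subsequence of iterations that change $s$'s action and create a $\gamma_a$-dominated cycle. After any iteration in $I_s$ the current policy has $s$ on a cycle dominated by $\gamma_a$, satisfying the hypotheses of Lemma \ref{lem_cycle_elimination_nonuniform}. Using that policy as the input $\pi$ of the lemma, one obtains an action $a_1'$ in the starting cycle that is permanently barred from all future $\gamma_a$-dominated cycles after $O(n^2 \log n)$ further iterations drawn from $I_s$. The next element of $I_s$ after that block then serves as a fresh input policy — its cycle around $s$ is still $\gamma_a$-dominated, but it cannot use $a_1'$ — and invoking the lemma again retires a new action $a_2' \neq a_1'$ after another $O(n^2 \log n)$ iterations of $I_s$.

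Iterating this retirement argument, $I_s$ splits into at most $m$ blocks of length $O(n^2 \log n)$, because each block permanently removes one previously-available action from the pool of at most $m$ actions. Thus $|I_s| = O(n^2 m \log n)$, and summing over the $n$ possible pivot states yields the claimed $O(n^3 m \log n)$ bound. The only subtle point is ensuring that successive applications of Lemma \ref{lem_cycle_elimination_nonuniform} produce distinct retired actions; this is immediate from the permanence clause ``never again appear in a cycle dominated by $\gamma_a$,'' which guarantees that each previously retired action is absent from the starting cycle of every subsequent application, so the chain must terminate in at most $m$ rounds. No further obstacle arises.
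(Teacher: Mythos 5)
Your argument is correct and reaches the same bound as the paper, via a slightly different organization of the same counting. The paper proceeds chronologically: in any window of $O(n^3\log n)$ qualifying iterations, pigeonhole yields a state $s$ pivoted on $O(n^2\log n)$ times, Lemma~\ref{lem_cycle_elimination_nonuniform} then retires an action, and since the retired action was on a $\gamma_a$-dominated cycle at the start of its window it must be new each time; after $m$ such windows no candidate remains, giving $m\cdot O(n^3\log n)$. You instead partition all qualifying iterations by pivoted state up front --- which is well-defined, as you note, since a newly created cycle must traverse the pivoted edge and hence contain the pivoted state --- and bound each of the $n$ classes $I_s$ by at most $m$ retirement blocks of length $O(n^2\log n)$ each, with distinctness of the retired actions within a class following from the permanence of retirement. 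Both routes produce $O(n^3 m\log n)$; yours has the modest advantage of making the ``distinct retired actions'' step fully explicit, whereas the paper leaves it implicit in ``after $m$ repetitions.'' There is no gap.
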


\begin{proof}
After $O(n^3 \log n)$ iterations that create a cycle dominated by $\gamma_a$, some state must have been updated in $O(n^2 \log n)$ of those iterations, so by Lemma \ref{lem_cycle_elimination_nonuniform} some action will never appear again in a cycle dominated by $\gamma_a$. After $m$ repetitions of this process all actions have been eliminated.
\end{proof}

\begin{thm}
Simplex terminates in at most $O(n^5 m^3 \log^2 n)$ iterations on deterministic MDPs with nonuniform discounts using the highest gain pivoting rule.
\end{thm}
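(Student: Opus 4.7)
The plan is to combine the two workhorse lemmas of this section in exactly the same pattern as the uniform-discount theorem, but tracking progress per dominating action rather than a global objective gap. First I would observe that every cycle created during the execution has its discount dominated by some action $a \in A$, so the cycle-creation events partition naturally into at most $m$ classes indexed by the dominating action.

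Next I would bound the total number of cycle-creation events. By Lemma \ref{lem_discount_elimination}, for any fixed action $a$ there can be at most $O(n^3 m \log n)$ iterations that create a cycle with discount dominated by $\gamma_a$. Summing over the $m$ possible dominating actions yields at most $O(n^3 m^2 \log n)$ iterations in which a new cycle is created during the entire run of the simplex method.

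Finally I would use Lemma \ref{lem_new_cycle_nonuniform} to bound the gap between consecutive cycle-creation events: between any two such events the algorithm runs for at most $O(n^2 m \log n)$ iterations before either terminating or creating another cycle. Multiplying the two bounds gives a total of $O(n^3 m^2 \log n) \cdot O(n^2 m \log n) = O(n^5 m^3 \log^2 n)$ iterations, as required.

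There is no real obstacle here, since the two supporting lemmas have already done all the heavy lifting; the only thing worth double-checking is the bookkeeping, namely that every iteration is correctly charged either to a cycle-creation event or to one of the $O(n^2 m \log n)$-length windows in between, and that the initial window before the first cycle creation is itself covered by Lemma \ref{lem_new_cycle_nonuniform}. Once that is verified, the multiplication of the two bounds closes the argument.
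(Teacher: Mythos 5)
Your proposal is correct and follows essentially the same argument as the paper: bound the number of cycle-creation events by $O(n^3 m^2 \log n)$ via Lemma \ref{lem_discount_elimination} summed over the $m$ possible dominating actions, then multiply by the $O(n^2 m \log n)$ bound from Lemma \ref{lem_new_cycle_nonuniform} on iterations between cycle creations. The bookkeeping you flag (including the initial window) works out exactly as you expect.
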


\begin{proof}
There are $O(m)$ possible discounts $\gamma_a$ that can dominate a cycle, and by Lemma \ref{lem_discount_elimination} there are at most $O(n^3 m \log n)$ iterations creating a cycle dominated by any particular $\gamma_a$, for a total of $O(n^3 m^2 \log n)$ iterations that create a cycle. By Lemma \ref{lem_new_cycle_nonuniform} a new cycle is created every $O(n^2 m \log n)$ iterations, for a total of $O(n^5 m^3 \log^2 n)$ iterations overall.
\end{proof}

\section{Open problems}

A difficult but natural next step would be to try to extend these techniques to handle policy iteration on deterministic MDPs. The main problem encountered is that the multiple simultaneous pivots used in policy iteration can interfere with each other in such a way that the algorithm effectively pivots on the {\em smallest} improving switch rather than the largest. See \cite{hansenzwick_mmc} for such an example. Another challenging open question is to design a strongly polynomial algorithm for general MDPs. Finally, we believe the technique of dividing variable values into polynomial sized layers may be helpful for entirely different problems.

\pdfbookmark[1]{\refname}{My\refname}
\bibliographystyle{alphaurl}
\bibliography{mdp_simplex}

\end{document}